\documentclass[sigconf,nonacm]{acmart}
\usepackage{subcaption}
\usepackage{float}
\usepackage{placeins}
\usepackage{enumitem}
\usepackage{makecell}
\usepackage{tikz}
\usetikzlibrary{arrows.meta,positioning}

\newtheorem{assumption}{Assumption}
\newtheorem{proposition}{Proposition}[section]

\begin{document}

\title{Multi-Experiment Analysis}

\author{Reza Hosseini}
\affiliation{
	\institution{LinkedIn Corporation}
	\city{Sunnyvale}
	\state{CA}
	\country{USA}
}
\email{rhosseini@linkedin.com}

\begin{abstract}
Online controlled experiments face growing challenges from overlapping tests on shared traffic, where interactions between concurrent experiments obscure insights into feature combinations and produce effect estimates that do not correspond to any actionable launch scenario. While traffic splitting, layering, and sequential execution (non-concurrent) mitigate some of these issues, they require coordination overhead and can reduce experimentation velocity.
We propose Multi-Experiment Analysis (MEA), a methodology for consistent joint estimation in the presence of arbitrary partial or full overlaps and multiple variants. MEA produces three types of estimates: (1) corrected individual treatment effects that account for the presence of overlapping experiments, (2) combined effects of launching any desired combination of variants across experiments, and (3) conditional effects of an experiment's variant given that specific variants of other experiments are launched or deramped---all without requiring factorial pre-design or traffic restrictions.
We validate the approach through comprehensive simulations confirming consistency and correct coverage. We report on production deployment at scale, illustrate the methodology through real-world use cases, and share practical lessons learned---including system design, adoption patterns, and insights from production use.
\end{abstract}

\keywords{Online Experimentation, A/B Testing,
	Overlapping Experiments, Causal Inference, Deployment}

\maketitle

	\section{Introduction}
	
	Online controlled experiments---also known as A/B tests,
	randomized experiments, or split tests---have become
	the gold standard for data-driven decision making in
	technology companies \cite{kohavi2007practical, kohavi2009controlled}.
	The ability to randomly assign users to treatment and control
	conditions, and measure the causal impact of product changes,
	has revolutionized how organizations develop and iterate on
	digital products \cite{kohavi2020trustworthy, thomke2020experimentation}.
	Major technology companies including Google, Microsoft, Amazon,
	Meta, Netflix, and LinkedIn now run thousands of experiments
	annually, with some organizations conducting tens of thousands
	of concurrent tests \cite{tang2010overlapping, kohavi2013kdd, larsen2024statistical}.
	These practices build on foundations dating to
	Neyman's potential outcomes
	framework~\cite{neyman1923application} and Fisher's
	randomization inference~\cite{fisher1935design}.

	Experimentation velocity---the ability to rapidly test,
	learn, and iterate on product changes---is critical
	for competitive advantage in technology, e-commerce,
	healthcare, and many other domains \cite{thomke2020experimentation}.
	Organizations running hundreds of concurrent experiments per quarter
	face persistent bottlenecks that limit their ability
	to test more ideas faster and obtain reliable results.
	
	Multiple technical approaches address experimentation
	bottlenecks.
	\textbf{Variance reduction} (VR) improves statistical
	efficiency by adjusting estimates with pre-experiment
	covariates
	\cite{freedman2008regression, lin2013agnostic,
	deng2013improving, xie2016improving,
	bloniarz2016lasso, wager2016high,
	hosseini2019unbiased, guo2021mlrate}.
	\textbf{Sequential testing} enables early stopping
	\cite{johari2017peeking}, and \textbf{surrogate
	metrics} accelerate readout with fast-moving proxies
	\cite{athey2019surrogate}.
	\textbf{Factorial designs}
	\cite{fisher1935design, wu2009experiments,
	box2005statistics} enable joint estimation of
	multiple factors but require pre-planned
	coordination, complete population overlap, and
	balanced allocation---impractical when experiments
	are launched independently by different teams with
	different triggering conditions and timelines.

	When multiple tests run concurrently on shared user
	populations, two problems arise:
	\emph{collisions}, where simultaneous experiments
	break user experience and must be prevented via
	traffic splitting or sequential execution; and
	\emph{interactions}, where experiments run together
	without breaking UX but bias naive estimates of
	individual effects and hide optimal combinations.
	Major platforms address overlaps through
	traffic layering and isolation
	\cite{tang2010overlapping, kohavi2013kdd,
	xu2015infrastructure, gupta2018anatomy}, which
	partition traffic so that experiments within the
	same layer do not interact.
	Recent empirical work further suggests that most
	experiment interactions are small
	\cite{jeng2023relax, chan2023embrace, larsen2024statistical},
	and screening all pairwise interactions may not be
	warranted.

	We introduce Multi-Experiment Analysis (MEA), a
	post-hoc framework for consistent analysis of
	overlapping experiments that are safe to run
	concurrently, without requiring traffic restrictions
	or sequential execution.
	MEA recovers insights analogous to factorial designs
	from naturally occurring partial or full overlaps,
	handling the reality of independent experiments
	with different triggering conditions and timelines.
	MEA is designed for situations where
	interactions do occur or cannot be ruled out:
	diagnosing suspected interactions after the fact
	when unexpected metric movements arise;
	quantifying the impact of uncontrollable concurrent
	changes such as company-wide migrations or
	platform updates;
	enabling parallel experimentation when long-readout
	metrics (e.g., retention, lifetime value) make
	temporal sequencing infeasible;
	analyzing same-surface or same-funnel overlaps
	where triggering populations obviously intersect;
	and estimating one experiment's effect conditional
	on another team's treatment being deployed---a
	scenario outside one's control.
	Critically, MEA does not merely test whether
	interactions exist---it estimates the impact of
	whatever launch scenario is chosen (launch both,
	launch only one, or estimate the effect assuming
	the other treatment ships), with the interaction
	diagnostic as a secondary output.

	While the underlying statistical
	techniques---post-stratification and weighted
	averaging within strata---are well established
	\cite{miratrix2013post, lohr2021sampling}, to our
	knowledge this work is the first to systematically
	apply them to causal estimation across overlapping
	online experiments with complex co-triggering
	patterns.
	The approach is entirely non-parametric---relying on
	weighted averages within strata rather than
	regression models or distributional
	assumptions---making it robust and broadly
	applicable.
	We formalize the causal assumptions
	under which MEA yields consistent estimates---in particular, \emph{Arm-Trigger Invariance},
	which requires that treatment assignments do not
	shift triggering probabilities across experiments
	--- and develop automated diagnostics that test
	this condition using experiment data alone.
	We also provide a systematic framework for
	conditional (scenario-based) estimation, addressing
	a common practitioner need: understanding the effect
	of one experiment given anticipated launch decisions
	for others.
	Beyond methodology, we report on production
	deployment at scale and share practical lessons
	learned.

	\section{Methodology}

	MEA applies to overlapping experiments that share
	orthogonal randomization---each experiment assigns
	units independently via hashing of user
	identifiers \cite{kohavi2020trustworthy}---and
	have sufficient temporal overlap to observe
	relevant trigger combinations.
	It is important to distinguish MEA from the
	literature on interference, where one unit's
	treatment affects another unit's outcome
	\cite{aronow2017estimating}.
	MEA addresses a different problem: bias from
	analyzing overlapping experiments separately,
	not cross-unit spillovers.
	MEA assumes SUTVA
	\cite{imbens2015causal}---each unit's outcome
	depends only on its own treatment assignments,
	not on others'.

	The framework applies to any number of experiments
	and arbitrary variants per experiment, fulfilling
	two core objectives: (1) identifying the optimal
	combination of features, and (2) estimating the
	conditional impact of one experiment given specific
	launch decisions for others.
	We illustrate with two experiments ($E_1$, $E_2$),
	but the framework generalizes to arbitrary numbers
	of experiments, variants, and overlap patterns.

	\subsection{Triggering-Based Population Partitioning}
	
	When experiments overlap, units are partitioned
	based on joint triggering logic.
	Let $A_i$ denote the arm assignment in
	experiment~$i$ (a latent label set before
	triggering),
	$S_i = \mathbf{1}\{\text{user triggers }E_i\}$
	the trigger indicator, and
	$V_i = A_i$ when $S_i = 1$,
	$V_i = \texttt{nan}$ otherwise (the observed
	variant).
	The trigger-state vector
	$S = (S_1, \ldots, S_k)$ defines mutually exclusive
	regions (e.g., four for two experiments):
	\begin{itemize}
		\item $\mathcal{R}_{10}$: only $E_1$ triggered ($S_1{=}1, S_2{=}0$)
		\item $\mathcal{R}_{01}$: only $E_2$ triggered ($S_1{=}0, S_2{=}1$)
		\item $\mathcal{R}_{11}$: both triggered ($S_1{=}1, S_2{=}1$)
		\item $\mathcal{R}_{00}$: neither triggered ($S_1{=}0, S_2{=}0$)
	\end{itemize}
	
	This partitioning isolates comparable
	sub-populations, analogous to post-stratification
	in causal inference or survey sampling, where
	conditioning on strata (here, trigger combinations)
	removes bias from confounding factors
	\cite{miratrix2013post, neyman1923application}.
	
	Figure~\ref{fig:lshape} illustrates the partitioning structure defined by 
	\emph{triggering regions}---subsets of the population determined 
	solely by which experiments were triggered, independent of variant assignment.
	These regions ($\mathcal{R}_{10}$, $\mathcal{R}_{01}$, $\mathcal{R}_{11}$) 
	form the basis for weighted estimation, with weights proportional to their respective sizes. 
	Collectively, these regions constitute the \emph{impacted surface} of the two experiments,
	whereas $\mathcal{R}_{00}$ represents the unimpacted partition.

	A key observation---often overlooked---is that
	the impact of launching $(t_1, t_2)$ is not
	restricted to $\mathcal{R}_{11}$: in
	$\mathcal{R}_{10}$, users experience only
	$t_1$ vs.\ $c_1$; in $\mathcal{R}_{01}$,
	only $t_2$ vs.\ $c_2$.
	The combination launch impact is a weighted
	average across \emph{all} triggered regions,
	not just the intersection.

	Figure~\ref{fig:variants} shows the finer partition
	within each triggering region: \emph{variant cells}
	defined by specific variant combinations. Each tuple
	denotes a variant combination and ``nan'' represents
	lack of triggering. For example, $(t_1, c_2)$ represents
	units where Experiment 1 treatment ($t_1$) triggered
	and Experiment 2 control ($c_2$) triggered---a variant
	cell within $\mathcal{R}_{11}$. Similarly,
	$(t_1, \texttt{nan})$ denotes a variant cell within
	$\mathcal{R}_{10}$ where only Experiment 1 triggered
	with $t_1$ and Experiment 2 did not trigger.
	
	\begin{figure}[t]
		\centering
		\includegraphics[width=0.95\columnwidth]{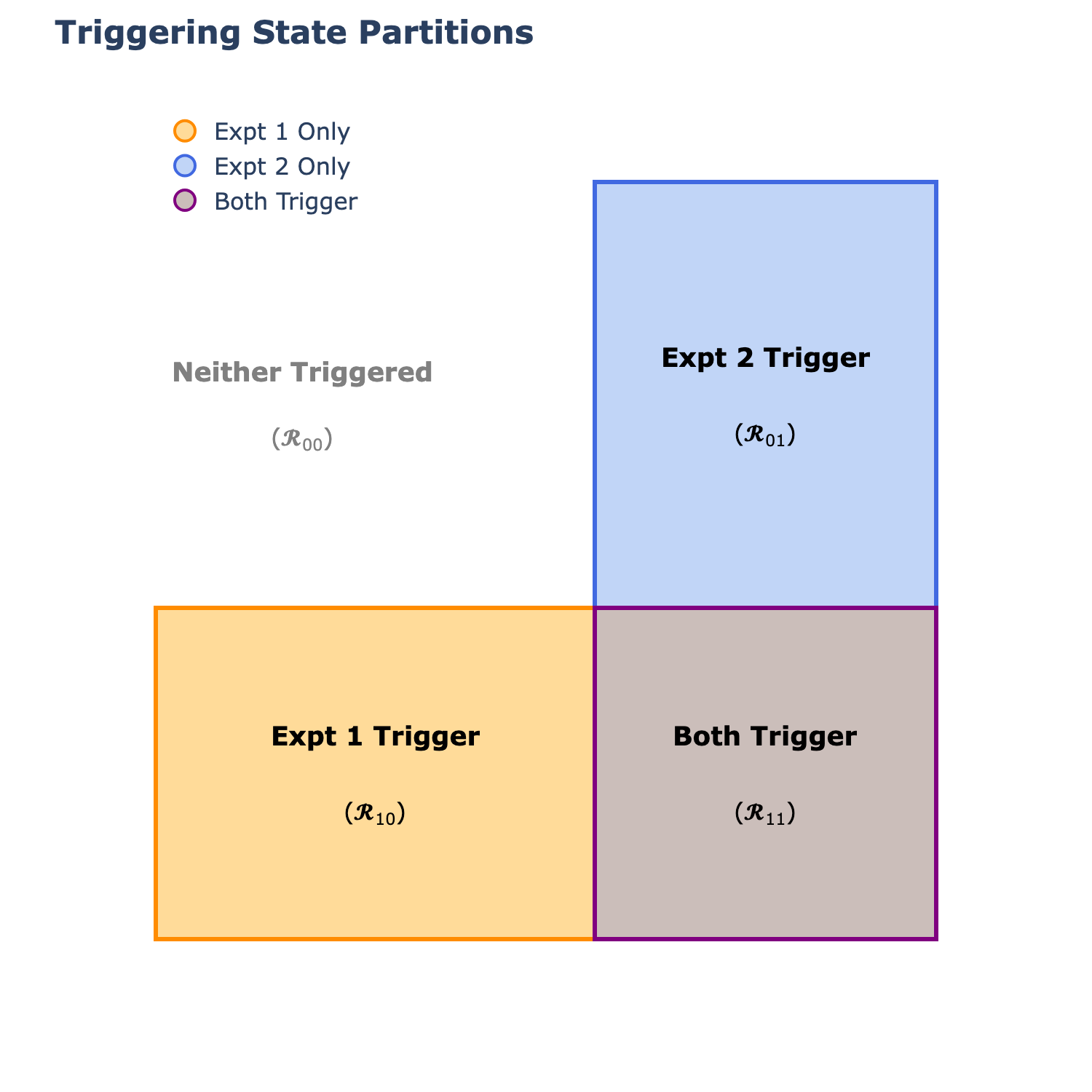}
		\caption{L-shape partitioning showing triggering regions
			$\mathcal{R}_{10}$ (orange), $\mathcal{R}_{01}$ (blue), and
			$\mathcal{R}_{11}$ (brown) for two overlapping experiments.
			These regions depend only on which experiments triggered,
			not on variant assignment.}
		\label{fig:lshape}
	\end{figure}
	
	\begin{figure}[t]
		\centering
		\includegraphics[width=0.95\columnwidth]{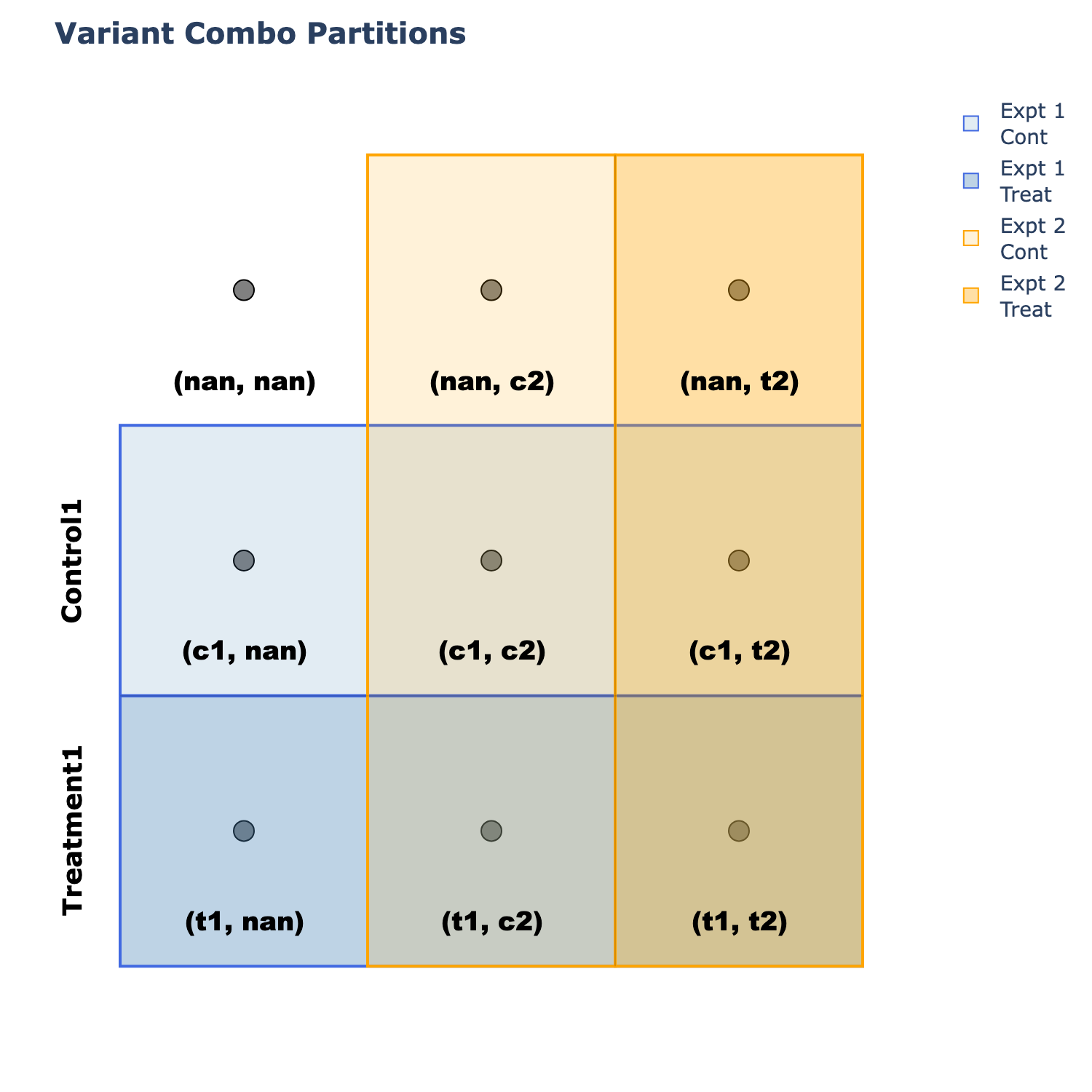}
		\caption{Variant cells within each triggering region,
			showing all variant combinations. Estimation compares
			relevant cells within each region.}
		\label{fig:variants}
	\end{figure}

	\subsection{Assumptions}

	The estimation approach builds on
	post-stratification \cite{miratrix2013post}, a
	classical technique in survey sampling
	\cite{lohr2021sampling} and causal inference
	\cite{hernan2020causal}.
	For this to produce valid causal estimates,
	two conditions must hold:
	\begin{enumerate}[nosep]
	\item \emph{Stable stratum weights
	  (Arm-Trigger Invariance).}
	  The triggering-region proportions must not shift
	  when a particular arm combination is launched:
	  $P(S{=}s \mid \mathrm{do}(A{=}a)) = P(S{=}s)$
	  for all $a$~\cite{pearl2009causality};
	  otherwise the weighted estimate targets the wrong
	  population mixture.
	  In practice, this means no experiment's treatment
	  assignment changes whether users trigger into
	  another experiment.
	\item \emph{Unbiased within-stratum estimation.}
	  Arm assignments must be independent within each
	  stratum:
	  $A_i \perp\!\!\!\perp A_j \mid S{=}s$
	  for all $i \neq j$, so that cell means are
	  unconfounded.
	\end{enumerate}
	Condition~(1) is the non-trivial causal
	requirement; condition~(2) follows from
	condition~(1) combined with independent hashing
	\cite{tang2010overlapping, kohavi2020trustworthy}.

	\textbf{Assumption not needed:} independence
	of trigger states. $S_i$ and $S_j$ may be
	freely correlated---and typically are, through
	shared user characteristics or even direct
	causal links (e.g., triggering $E_1$ makes
	triggering $E_2$ more likely)---the assumption
	only requires that \emph{arm assignments} do
	not shift trigger probabilities.
	Arm-Trigger Invariance can be violated when one
	experiment's treatment changes user behavior in a
	way that affects triggering of another experiment---for example, if $E_1$'s treatment adds a
	prominent link to a page where $E_2$ triggers,
	users in $E_1$'s treatment arm will trigger $E_2$
	at a higher rate than control users, biasing the
	overlap population.
	
	Appendix~\ref{sec:assumption-checking} formalizes
	Arm-Trigger Invariance via a causal graph,
	proves the proposition above, and develops
	automated chi-squared diagnostics (using
	Cram\'{e}r's~$V$ as an effect-size gate)
	that test this condition using experiment data
	alone, without requiring external knowledge of
	experiment designs.
	Every production MEA report includes both
	the statistical test and a grouped bar-chart
	visualization (see
	Figure~\ref{fig:assumption-check} for a
	production case where the assumption is
	violated) so that users can inspect the
	conditional variant distributions directly.

	\begin{figure}[t]
	\centering
	\includegraphics[width=\columnwidth]{%
	  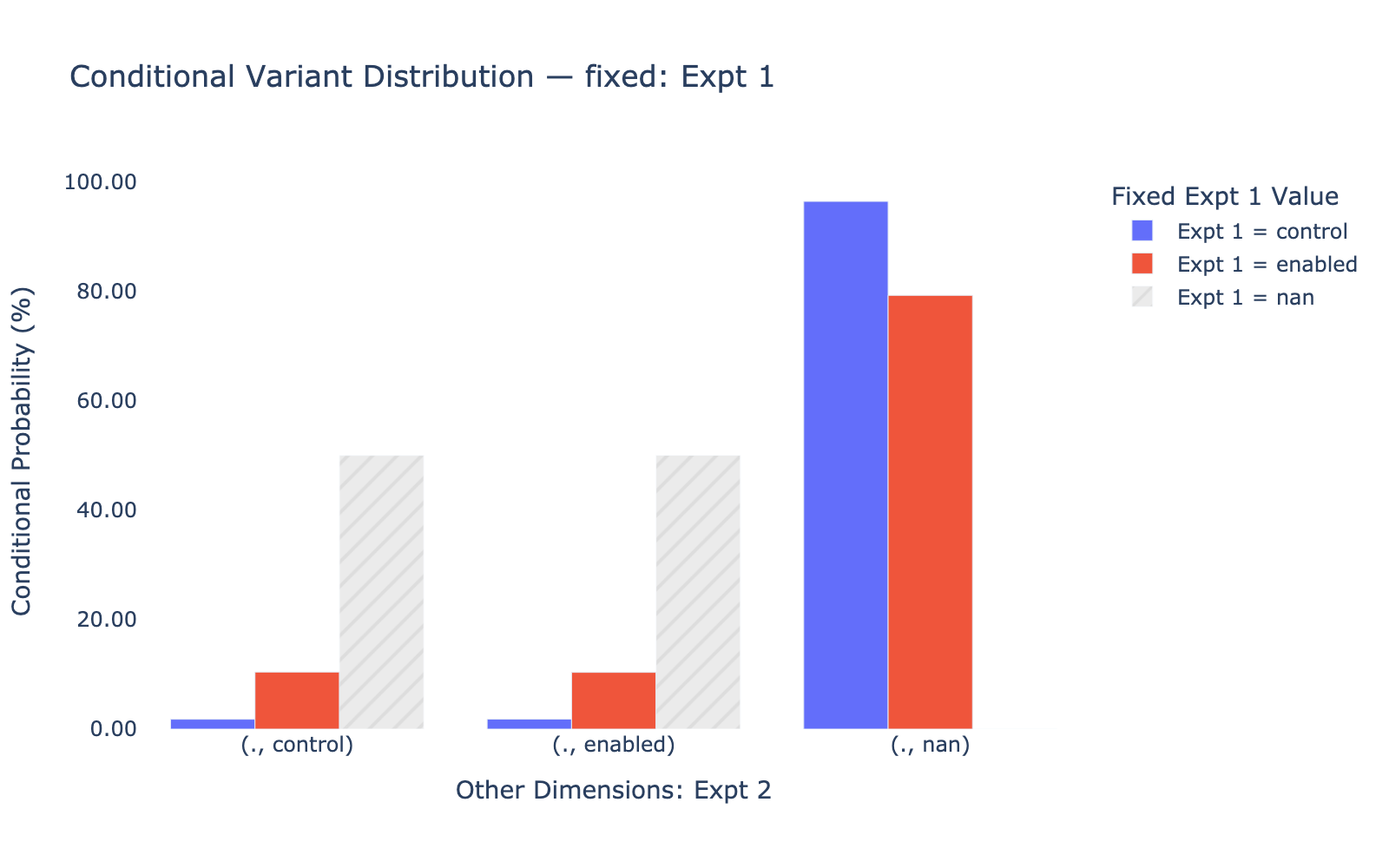}
	\caption{Production violation (FLAG):
	  two notification experiments.
	  Fixing $E_1$, the \texttt{nan} column
	  shows 96\% of control members never trigger
	  $E_2$ vs.\ 79\% of treatment members ---
	  treatment in $E_1$ dramatically increases
	  $E_2$'s trigger rate ($V = 0.262$).
	  Non-\texttt{nan} columns (control, enabled)
	  are balanced, confirming assignment
	  independence within $\mathcal{R}_{11}$.
	  The \texttt{nan} bars are greyed out because
	  trigger-state correlation is \emph{not} a
	  required assumption: only the
	  non-\texttt{nan} columns need to match across
	  arms for Arm-Trigger Invariance to hold.}
	\label{fig:assumption-check}
	\end{figure}

	\subsection{Consistent Weighted Estimation}

	Estimation proceeds in two stages: (1) compute weights
	at the triggering region level based on region sizes,
	and (2) within each triggering region, compare the
	relevant variant cells.
	
	The effect for a treatment combination is estimated
	as a weighted average of region-specific differences:
	\begin{equation}
		\hat{\tau} = \sum_r w_r \, (\bar{Y}_{t,r} - \bar{Y}_{c,r})
	\end{equation}
	where $\bar{Y}_{t,r}$ and $\bar{Y}_{c,r}$ are mean
	outcomes for the target and baseline variant cells
	within triggering region $r$, and $w_r = n_r / \sum_{r'} n_{r'}$
	are proportional to triggering region sizes.
	
	\textbf{Impacted population weighting.} In practice,
	we estimate effects over the \emph{impacted population}---units
	who triggered at least one experiment. Units in
	$\mathcal{R}_{00}$ (neither experiment triggered)
	experience no treatment effect regardless of the
	launch decision and are excluded to increase
	statistical power, as any observed differences
	in $\mathcal{R}_{00}$ are due solely to randomization noise:
	\begin{equation}
		w_r = \frac{n_r}{\sum_{r' \neq \mathcal{R}_{00}} n_{r'}}
		\quad \text{for } r \neq \mathcal{R}_{00}
	\end{equation}
	This ensures the estimated effect reflects the impact
	on the population actually affected by the launch decision.
	For site-wide metric estimation, the impacted-population
	effect can be appropriately diluted by the fraction of
	triggered users to recover the full-population effect;
	however, focusing estimation on the impacted population
	yields tighter confidence intervals without sacrificing
	validity for launch decisions.
	
	This estimator is consistent by the same
	logic as post-stratification, relying on both
	assumptions from Section~2.2:
	stable stratum weights ensure the weights
	$w_r$ target the correct population mixture;
	within-stratum independence ensures that
	cell means are unbiased.
	Together, reweighting by region proportions recovers
	the population-level causal effect
	\cite{miratrix2013post,
		lohr2021sampling, li2017general, imbens2015causal}.
	
	\textbf{Variance estimation.} Since variant assignments
	are independent across triggering regions under
	orthogonal randomization, the variance is:
	\begin{equation}
		\widehat{\mathrm{Var}}(\hat{\tau}) = \sum_r w_r^2 \, 
		\widehat{\mathrm{Var}}(\bar{Y}_{t,r} - \bar{Y}_{c,r})
	\end{equation}
	
	\textbf{Weight uncertainty.} The formula above treats
	weights as fixed. When estimated from data, weights
	are random variables. Because triggering determines
	partition membership while effect estimates depend
	on variant assignments within partitions, and these
	are independent under orthogonal randomization,
	the variance decomposes as:
	\begin{equation}
		\mathrm{Var}(\hat{w}_r \hat{\delta}_r) = 
		\mathbb{E}[\hat{w}_r]^2 \mathrm{Var}(\hat{\delta}_r) + 
		\mathbb{E}[\hat{\delta}_r]^2 \mathrm{Var}(\hat{w}_r) + 
		\mathrm{Var}(\hat{w}_r) \mathrm{Var}(\hat{\delta}_r)
	\end{equation}
	The standard formula captures only the first term.
	For large samples the additional terms are negligible,
	but for smaller samples, ignoring weight uncertainty
	can underestimate standard errors. Bucketed jackknife
	(described below) automatically captures all sources
	of variability through resampling.
	
	The approach is validated through simulations
	(Appendix~\ref{sec:simulations}) confirming
	consistency and correct coverage.
	
	\subsection{Algorithm 1: Combination Impact Analysis}
	
	To estimate the effect of launching combination
	$(t_1, t_2)$ versus baseline $(c_1, c_2)$:
	
	\begin{enumerate}
		\item \textbf{Identify variant cells to compare
			within each triggering region:}
		\begin{itemize}
			\item $\mathcal{R}_{11}$: $(t_1, t_2)$ vs $(c_1, c_2)$
			\item $\mathcal{R}_{10}$: $(t_1, \texttt{nan})$ vs $(c_1, \texttt{nan})$
			\item $\mathcal{R}_{01}$: $(\texttt{nan}, t_2)$ vs $(\texttt{nan}, c_2)$
		\end{itemize}
		
		\item \textbf{Compute region-specific effects:}
		$\hat{\delta}_r = \bar{Y}_{t,r} - \bar{Y}_{c,r}$
		
		\item \textbf{Aggregate:}
		$\hat{\tau}_{(t_1, t_2)} = \sum_r w_r \, \hat{\delta}_r$
		
		\item \textbf{Select optimal:} Repeat for all
		combinations and identify the one maximizing
		the objective metric.
	\end{enumerate}
	
	Figure~\ref{fig:algo1} illustrates which variant
	cells contribute (checkmarks) and which are
	irrelevant (×) for estimating $(t_1, t_2)$ vs $(c_1, c_2)$.
	Interactions can be positive (synergistic) or negative
	(antagonistic); MEA quantifies both.
	
	MEA reports a $p$-value for each combination effect
	estimate alongside a Bonferroni-adjusted significance
	threshold $\alpha / (\text{number of combinations})$,
	so users can directly assess which effects are
	statistically significant after multiplicity correction.
	Note that corrected individual effects are a special
	case: the effect of a single experiment's variant
	(e.g., $(t_1, c_2)$ vs.\ $(c_1, c_2)$) is estimated
	by the same weighted procedure, automatically
	accounting for the concurrent state of other
	experiments.

	\subsection{Algorithm 2: Scenario-Based Analysis}
	
	To estimate the conditional effect of $E_2 = t_2$
	given $E_1 = t_1$ is launched:
	
	\begin{enumerate}
		\item \textbf{Fix baseline scenario:} Set $E_1 = t_1$
		as the assumed launch state.
		
		\item \textbf{Identify relevant comparisons:}
		\begin{itemize}
			\item $\mathcal{R}_{11}$: $(t_1, t_2)$ vs $(t_1, c_2)$
			\item $\mathcal{R}_{01}$: $(\texttt{nan}, t_2)$ vs $(\texttt{nan}, c_2)$
			\item $\mathcal{R}_{10}$: Excluded---$E_2$ did not
			trigger, so outcomes are unaffected by $E_2$ launch
		\end{itemize}
		
		\item \textbf{Recompute weights:} Over relevant
		regions only ($\mathcal{R}_{11}$ and $\mathcal{R}_{01}$)
		
		\item \textbf{Aggregate:} Weighted sum of
		region-specific effects
	\end{enumerate}
	
	Figure~\ref{fig:algo2} shows relevant combinations
	(checkmarks) for conditional analysis and excluded
	regions (×).
	
	\begin{figure}[t]
		\centering
		\includegraphics[width=0.95\columnwidth]{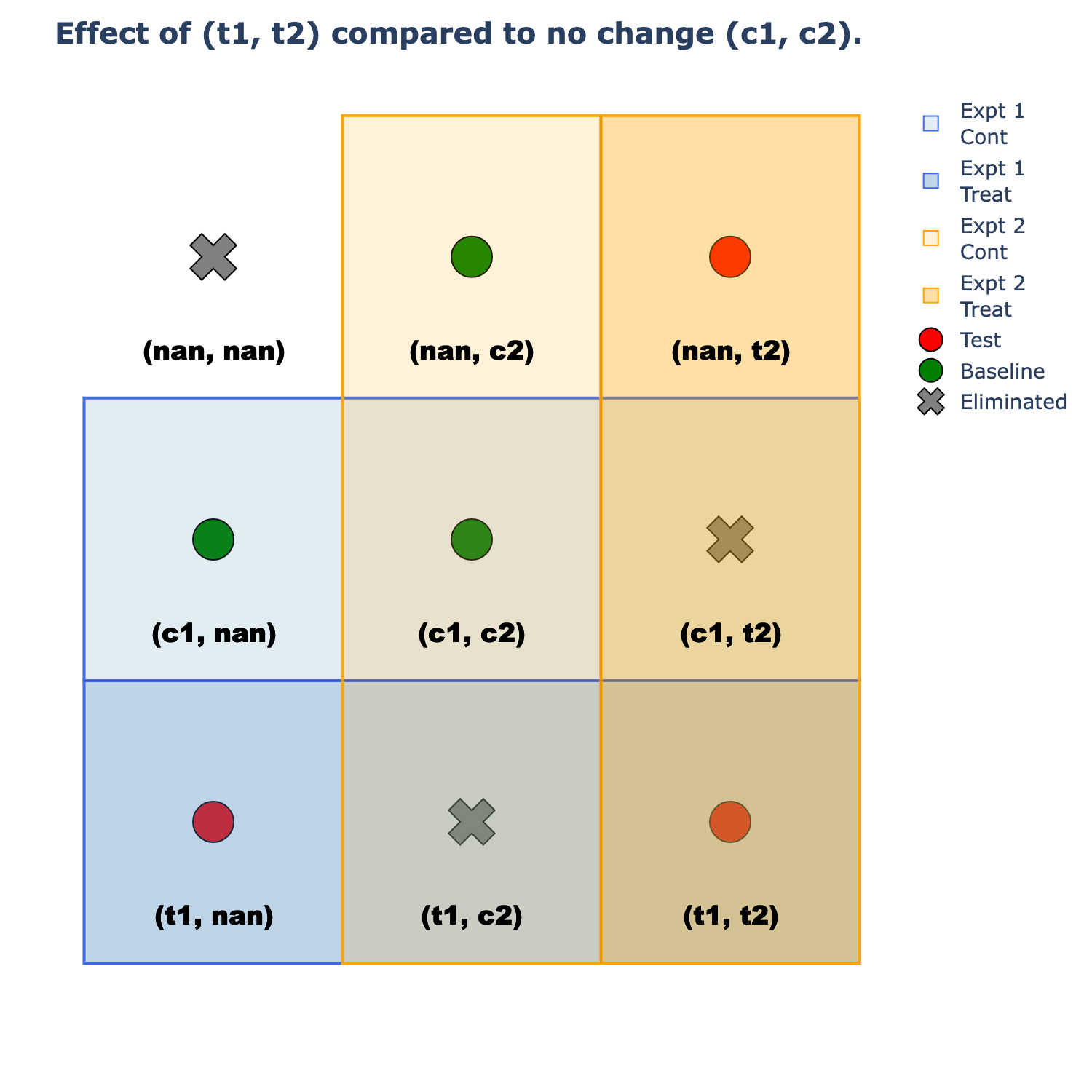}
		\caption{Combination analysis: checkmarks indicate
			relevant variant cells; × marks irrelevant ones.}
		\label{fig:algo1}
	\end{figure}
	
	\begin{figure}[t]
		\centering
		\includegraphics[width=0.95\columnwidth]{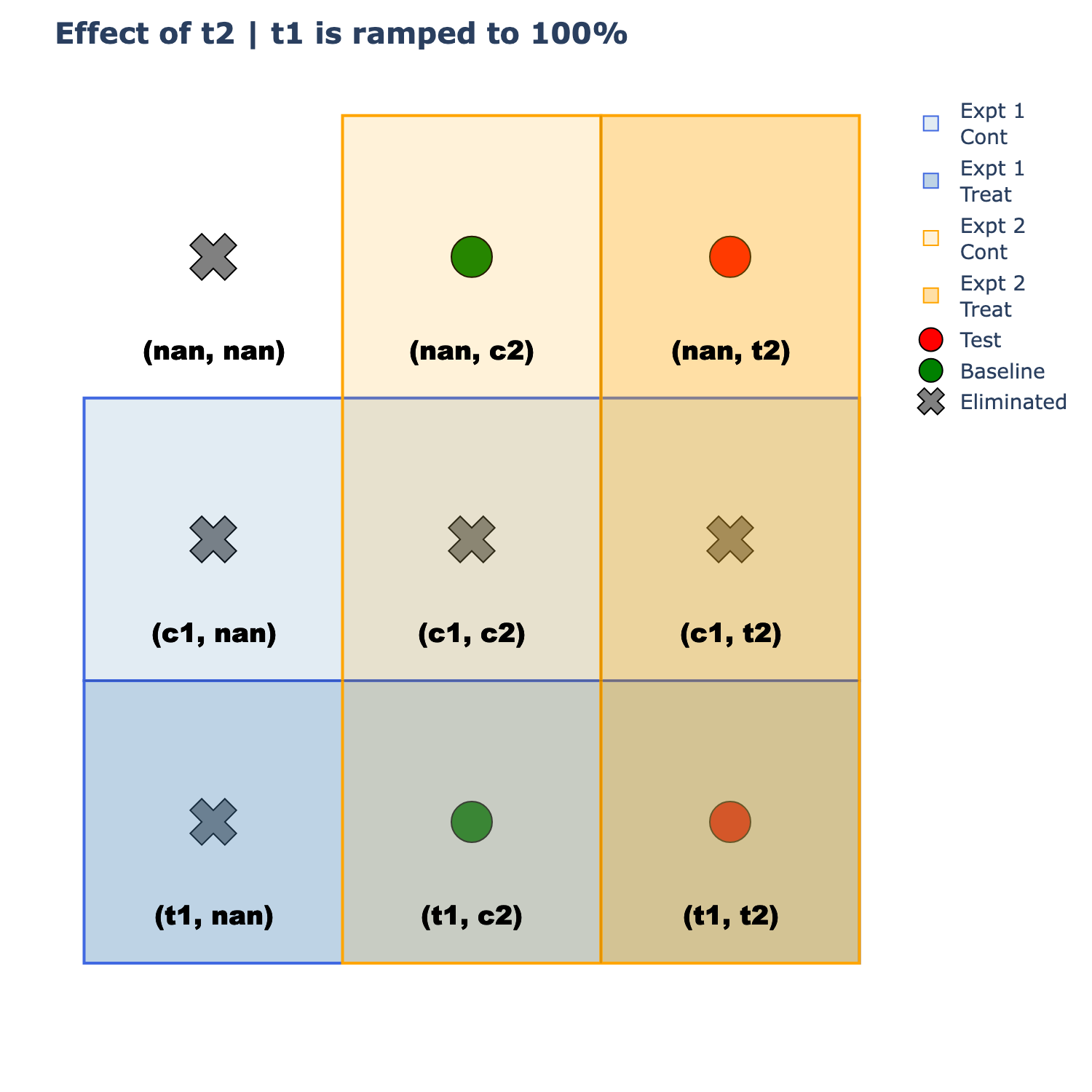}
		\caption{Conditional analysis: only cells matching
			the fixed scenario (checkmarks) are used.}
		\label{fig:algo2}
	\end{figure}
	
	\subsection{Extension to Ratio Metrics}
	\label{sec:ratio-metrics}

	Many important online experiment metrics are ratios,
	such as conversion rates, click-through rates, and
	retention metrics. MEA naturally extends to ratio
	metrics through the same partitioning framework.
	
	For a ratio metric $\theta = \mu_N / \mu_D$ (where
	$\mu_N$ is the numerator mean and $\mu_D$ is the
	denominator mean), we apply the weighted estimation
	separately to numerator and denominator:
	
	\begin{equation}
		\hat{\mu}_{N,t} = \sum_r w_r \bar{N}_{t,r}, \quad
		\hat{\mu}_{N,c} = \sum_r w_r \bar{N}_{c,r}
	\end{equation}
	\begin{equation}
		\hat{\mu}_{D,t} = \sum_r w_r \bar{D}_{t,r}, \quad
		\hat{\mu}_{D,c} = \sum_r w_r \bar{D}_{c,r}
	\end{equation}
	
	The treatment effect estimator for the ratio metric
	is then:
	\begin{equation}
		\hat{\tau}_{\theta} = \frac{\hat{\mu}_{N,t}}{\hat{\mu}_{D,t}} - \frac{\hat{\mu}_{N,c}}{\hat{\mu}_{D,c}}
	\end{equation}
	
	By the Delta method \cite{van2000asymptotic}, this
	estimator is consistent and asymptotically normally
	distributed under standard regularity conditions.
	This approach aligns with how ratio metrics
	are handled in standard A/B testing practice
	\cite{deng2013improving, kohavi2020trustworthy}.
	
	For variance estimation, while the Delta method
	provides analytic formulas, we employ \textbf{bucketed
		jackknife} \cite{efron1994introduction} in practice
	due to its broader applicability and more tractable
	implementation. The bucketed jackknife partitions
	data into $B$ buckets (typically $B = 20$--$50$)
	and computes leave-one-bucket-out estimates, requiring
	only $B$ recomputations rather than leave-one-out
	estimation for each observation. This makes it
	computationally feasible even with hundreds of
	millions of units.
	
	The bucketed jackknife variance estimator is:
	\begin{equation}
		\widehat{\mathrm{Var}}_{BJ}(\hat{\tau}_{\theta}) = 
		\frac{B-1}{B} \sum_{b=1}^{B} (\hat{\tau}_{\theta}^{(-b)} - \bar{\hat{\tau}}_{\theta})^2
	\end{equation}
	where $\hat{\tau}_{\theta}^{(-b)}$ is the estimate
	with bucket $b$ removed and $\bar{\hat{\tau}}_{\theta}$
	is the mean of the jackknife estimates.
	
	Crucially, bucketed jackknife applies uniformly to
	any metric type (means, ratios, quantiles, complex
	transformations) without requiring metric-specific
	variance derivations. This generality simplifies
	implementation and enables MEA to support arbitrary
	user-defined metrics with minimal code changes.
	Additionally, because the jackknife resamples both
	the outcome data and the region membership indicators,
	it automatically captures uncertainty from weight
	estimation as discussed above.

	\subsection{Multiple Testing and Power}

	The number of variant combinations grows
	multiplicatively: with $k$ experiments
	having $\ell_1, \ell_2, \ldots, \ell_k$ variants
	respectively, there are
	$\prod_{i=1}^{k} \ell_i - 1$ non-baseline
	combinations.
	This combinatorial growth creates two
	challenges---inflated family-wise error rates and
	reduced per-combination power---that affect any
	joint analysis, whether MEA or a coordinated
	factorial design, equally: both must test the same
	set of combinations and apply the same multiplicity
	correction (e.g., Bonferroni at
	$\alpha / (\prod \ell_i - 1)$)
	\cite{benjamini1995controlling}.
	As a practical guideline,
	two experiments with three variants each yield
	$3 \times 3 - 1 = 8$ combinations, while three
	experiments with two variants each yield
	$2 \times 2 \times 2 - 1 = 7$.
	We recommend focusing on two- or
	three-way overlaps and limiting analysis to key
	variant subsets rather than exhaustive enumeration.

	Where MEA differs from a factorial is in
	\emph{statistical power}.
	Under mild assumptions (balanced randomization,
	homoscedasticity),
	Appendix~\ref{sec:mea-sample-size} shows that
	MEA's variance for any combination effect is at
	most that of a coordinated factorial, with
	exponentially growing savings as $k$ increases:
	for $k$ binary experiments with 50\% trigger rates,
	the variance ratio scales as $(3/4)^k$.
	The efficiency gain comes from two sources:
	singly-triggered strata involve fewer variant cells
	than the full factorial grid, and stratification by
	triggering pattern reduces within-cell variance.


\section{Example Use Case}

We present a real-world application of MEA
to illustrate its practical value in production
experimentation.

\textit{Note: Effect sizes are reported as relative
	differences, the metric is monotonically transformed,
	and exact sample sizes are omitted (the analysis
	involved several million units).}

\subsection{Motivation and Setup}

A product team sought to optimize conversion rates
on a high-traffic product page that serves as an
entry point into a subscription conversion funnel.
Two independent design changes were identified:

\begin{itemize}
	\item \textbf{Experiment 1 (Content Test):}
	Five variants testing different content and messaging
	on a conversion element embedded in the page
	(control, v1, v2, v3, v4)
	\item \textbf{Experiment 2 (Page Redesign):}
	Three variants testing a full redesign of the
	page layout (control, v1, v2)
\end{itemize}

A full factorial design was not feasible---not only
due to engineering complexity, but because the two
features had different triggering conditions: the
conversion element (Expt 1) triggered only for
members who reached that element within the page,
while the page redesign (Expt 2) triggered broadly
for all page visitors. This naturally resulted in
an asymmetric overlap pattern where Experiment 1
was largely (but not fully) nested within
Experiment 2's triggering population.

The primary metric of interest was conversion rate.
Traditional univariate analysis of each experiment
in isolation would not reveal the optimal combination
or detect potential interactions.

\subsection{MEA Application and Results}

Table~\ref{tab:usecase_overlap} shows the overlap
diagnostics from MEA, revealing the asymmetric
triggering pattern: 86.5\% of Expt 1 units also
triggered Expt 2, while only 64.4\% of Expt 2 units
triggered Expt 1.

\begin{table}[t]
	\centering
	\caption{Overlap diagnostics showing asymmetric
		triggering overlap between experiments.}
	\label{tab:usecase_overlap}
	\small
	\begin{tabular}{lr}
		\toprule
		Trigger State & Percent \\
		\midrule
		Both experiments & 58.5\% \\
		Only Expt 1 & 9.1\% \\
		Only Expt 2 & 32.3\% \\
		\bottomrule
	\end{tabular}
\end{table}

\begin{table}[t]
	\centering
	\caption{Univariate analysis results (each experiment analyzed in isolation).}
	\label{tab:usecase_univariate}
	\small
	\begin{tabular}{llr}
		\toprule
		Experiment & Best Variant & Delta (\%) \\
		\midrule
		Expt 1 (Content) & v4 & +36.9 \\
		Expt 2 (Redesign) & v2 & +26.2 \\
		\bottomrule
	\end{tabular}
\end{table}

Table~\ref{tab:usecase_univariate} shows the univariate
results for each experiment analyzed in isolation.
The univariate effect of Expt 2 v2 (+26.2\%) averages
over all concurrent states of Expt 1---a mixture that
is an artifact of the experimental period and will not
exist post-launch, when Expt 1 will be in a specific
decided state. MEA decomposes this into scenario-specific
estimates; for example, the effect of Expt 2 v2 when
Expt 1 is held at control is +48.6\%, as shown in
Table~\ref{tab:usecase_bias}.

\begin{table}[t]
	\centering
	\caption{Univariate vs MEA scenario-specific estimate for Expt 2 v2.}
	\label{tab:usecase_bias}
	\small
	\begin{tabular}{lr}
		\toprule
		Analysis & Expt 2 v2 Effect (\%) \\
		\midrule
		Univariate & +26.2 \\
		MEA (control, v2) & +48.6 \\
		\bottomrule
	\end{tabular}
\end{table}

MEA identifies (v4, v2) as the optimal combination,
as shown in Table~\ref{tab:usecase_results}.

\begin{table}[t]
	\centering
	\caption{MEA combination effects for top variants.}
	\label{tab:usecase_results}
	\small
	\begin{tabular}{lr}
		\toprule
		Combination & Delta (\%) \\
		\midrule
		(v4, v2) & +96.0 \\
		(v4, control) & +62.3 \\
		(control, v2) & +48.6 \\
		\bottomrule
	\end{tabular}
\end{table}

\subsection{Decision Impact}

The MEA analysis revealed several key findings:

\begin{enumerate}
	\item \textbf{Estimand mismatch}: The univariate estimate
	of Expt 2 v2 (+26.2\%) does not correspond to any
	actionable launch scenario. MEA reveals that the
	scenario-specific effect ranges from +48.6\% (Expt 1
	at control) to other values depending on which Expt 1
	variant is launched---a difference of over 22 percentage
	points from the univariate number.
	\item \textbf{Non-additive interaction}: Individual effects
	do not sum to the combined effect, whether measured as
	absolute mean differences or relative percent changes
	(we present only percent changes here for confidentiality).
	This demonstrates the presence of meaningful interactions
	between the two experiments, which univariate analysis
	would entirely miss.
	\item \textbf{Optimal combination}: MEA identifies
	(v4, v2) as the best-performing combination with
	+96\% lift, a conclusion that requires joint analysis
	and cannot be reached by analyzing each experiment
	in isolation.
\end{enumerate}

Based on MEA results, the team confidently launched
the (v4, v2) combination. This decision would not have
been possible with sequential execution (which would have
delayed launch by weeks) or univariate analysis (which
would have missed both the estimand mismatch and the
interaction effect).

The ability to test both changes simultaneously while
recovering consistent combination estimates reduced
time-to-market and enabled data-driven selection of
the optimal feature set.
Appendix~\ref{sec:concurrent-vs-sequential}
formalizes this advantage through a simulation showing
that univariate analysis and sequential execution both
recommend the worst launch combination, while MEA
correctly identifies the optimum.

	\section{System Design and Implementation}
	
	MEA is implemented as a self-serve, on-demand tool.
	As noted in Section~1, most experiment
	interactions are small
	\cite{jeng2023relax, chan2023embrace, larsen2024statistical},
	and MEA is not needed for every pair.
	Pre-computing all possible overlaps is computationally
	infeasible ($O(n(n-1)/2)$ for pairs, $2^n$ for subsets).
	On-demand processing concentrates resources on
	user-specified high-priority overlap sets.
	
	The system is designed to handle experiments with
	hundreds of millions of units and hundreds of metrics.
	Figure~\ref{fig:system} illustrates the overall
	system architecture.
	
	\subsection{Query Construction and Execution Flow}
	
	The core computation is designed around efficient,
	scalable data processing, as shown in
	Figure~\ref{fig:system}:
	
	\begin{itemize}
		\item Assignment queries are dynamically constructed
		based on user-provided experiment IDs, variants,
		and triggering conditions (or standard exit logs).
		\item These assignment queries are joined with metric
		tables (pre-registered or user-specified) to produce
		a unified table containing unit-level assignment
		and metric values.
		\item The heavy join and filtering operations are
		executed via distributed query engines.
		Two back-ends are supported:
		Trino for interactive, smaller-scale analyses
		and Spark for larger jobs or scheduled
		recurring runs.
		\item Once materialized, summary statistics (means,
		counts, variances) are computed per
		combination for each metric---again using the
		distributed engine for parallelism across hundreds
		of metrics.
		\item The final small aggregated table (
		combination × metric summaries) is brought into
		memory for the lightweight weighted estimation
		and variance computation.
	\end{itemize}
	
	This hybrid approach---heavy lifting in distributed SQL,
	lightweight final math in memory---enables both massive
	scale and fast end-to-end runtime.
	The overall complexity is $O(N)$:
	MEA's $k$-way join is a sequence of $k$ hash joins
	(each $O(N)$, identical to joining $k$ independent
	experiments), aggregation adds one \texttt{GROUP BY}
	dimension for jackknife buckets, and variance
	estimation operates on the small summary table
	with no additional data scan.
	In production, MEA runs in minutes on the Trino
	back-end for experiments with tens of millions of
	members per arm, and in under 30 minutes on the
	Spark back-end for datasets with hundreds of
	millions of members.
	
	\begin{figure}[t]
		\centering
		\includegraphics[width=0.95\columnwidth]{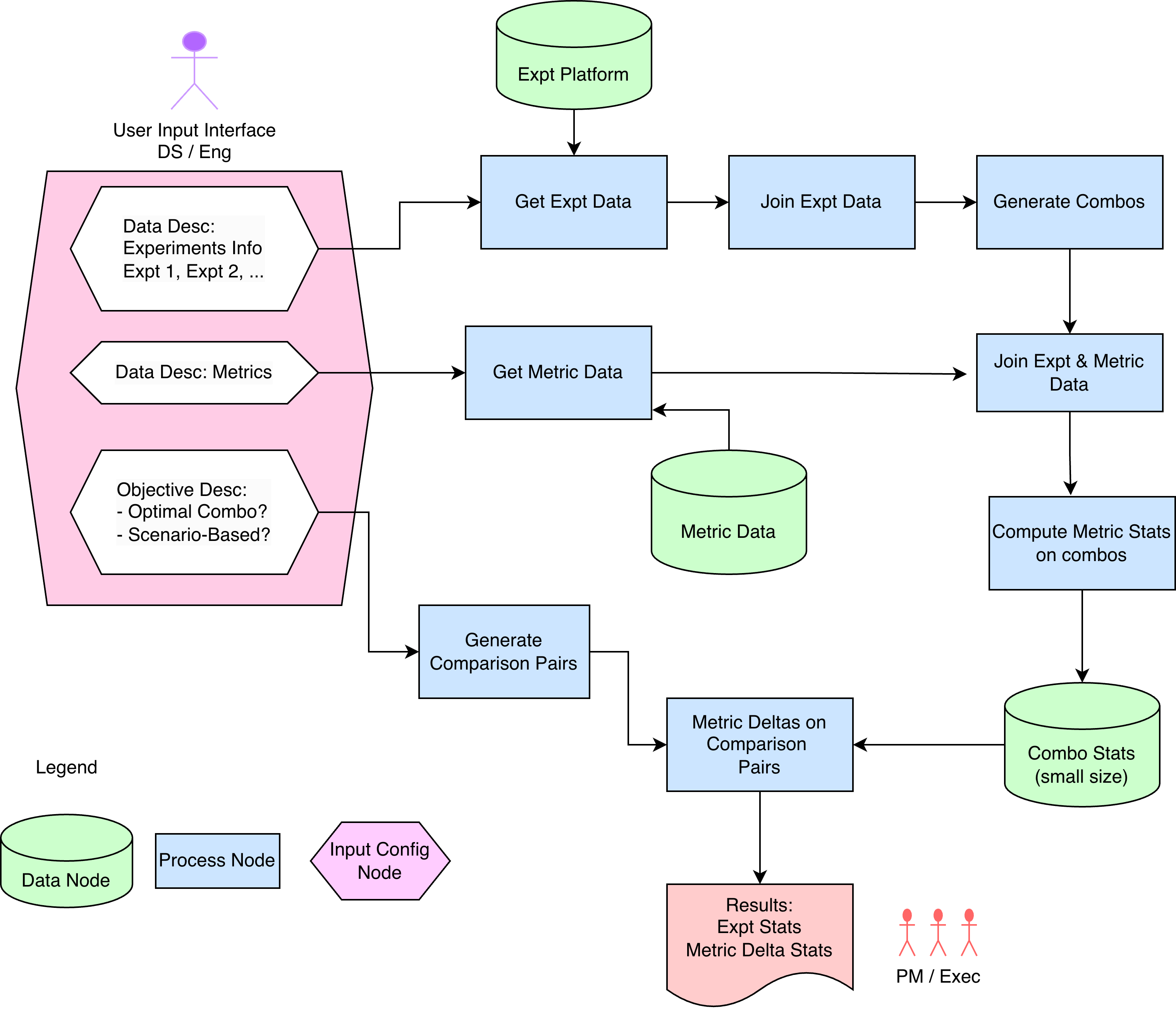}
		\caption{MEA system architecture showing the hybrid
			approach combining distributed computation for
			heavy data processing with in-memory computation
			for final weighted estimation.}
		\label{fig:system}
	\end{figure}
	
	\subsection{Support for Standard and Flexible Analysis}
	
	For standard experiments, users provide minimal
	configuration---experiment IDs, variant lists, and
	references to pre-registered metric tables---and
	assignment queries are automatically constructed.
	For complex use cases, MEA supports custom metrics
	defined on the fly, arbitrary metric tables, and
	user-provided assignment queries for non-standard
	setups (e.g., complex bucketing, holdouts).
	A lightweight local mode accepts Pandas DataFrames
	directly for smaller-scale or prototyping use cases.

	\section{Deployment and Lessons Learned}

	Hundreds of MEA analyses have been run across more
	than eight product verticals.
	We audited approximately 50 documented analyses;
	the statistics below are drawn from that audit.

	Adoption started in the Premium (Subscriptions)
	business with a limited set of high-priority metrics.
	This phased rollout strategy proved essential for
	building user trust and iterating on the interface
	before broader expansion.
	An intake process where the data science team reviewed
	use case feasibility helped ensure MEA was applicable
	and users had appropriate expectations.

	Table~\ref{tab:usage} summarizes the primary usage
	categories observed across audited analyses.
	Categories overlap---a single analysis can serve
	multiple purposes (e.g., debugging that also informs
	a launch decision).

	\begin{table}[t]
	\centering
	\caption{Usage categories across audited MEA analyses.}
	\label{tab:usage}
	\small
	\begin{tabular}{lr}
	\toprule
	\textbf{Category} & \textbf{\%} \\
	\midrule
	Optimal combination identification & ${\sim}35\%$ \\
	Scenario-based conditional analysis & ${\sim}30\%$ \\
	Debugging / diagnostics & ${\sim}35\%$ \\
	Cross-team experiments & ${\sim}33\%$ \\
	Changed or informed decision & $80$--$85\%$ \\
	Independence flag triggered & ${\sim}4\%$ \\
	\bottomrule
	\end{tabular}
	\end{table}

	Debugging proved surprisingly valuable: even when
	MEA's independence assumptions are not fully satisfied,
	the overlap visualizations, triggering distributions,
	and metric breakdowns by variant combination help
	identify configuration errors, non-orthogonal
	assignments, and unexpected metric movements---making
	the tool useful for root-cause analysis beyond its
	primary statistical purpose.

	\subsection{Representative Use Cases}
	Beyond the detailed example in Section~3, the
	following vignettes illustrate the range of
	production applications:
	\begin{itemize}[nosep]
	\item \textbf{Launch unblocking.}
	  A monetization experiment could not launch while
	  a concurrent experiment was still running.
	  MEA showed the effect was robust to all
	  concurrent variants, unblocking launch early.
	\item \textbf{ML attribution.}
	  Three concurrent ML models had near-total user
	  overlap.
	  The standard platform attributed equal credit;
	  MEA identified one model as the primary
	  driver and revealed a $2\times$ attribution
	  difference---the team acknowledged
	  their standard reports were
	  ``biased by interactions.''
	\item \textbf{Debugging.}
	  Two experiments showed negative metric
	  impact simultaneously.
	  MEA isolated a platform migration as the
	  primary cause, avoiding a costly shutdown of
	  both experiments.
	\item \textbf{Cross-team analysis.}
	  Two experiments from different teams
	  shared the same page surface.
	  MEA showed one had no incremental
	  effect---the other was the dominant
	  driver across key metrics.
	\item \textbf{SSRM diagnosis.}
	  A concurrent change appeared to cause sample
	  ratio mismatch in another experiment.
	  MEA revealed treatment-induced triggering as
	  the mechanism, not a data error---the team
	  proceeded without pausing.
	\item \textbf{Migration safety.}
	  A platform-wide migration touched
	  surfaces shared by dozens of experiments.
	  MEA confirmed no material interaction,
	  clearing both the migration and all
	  experiments to ramp independently.
	\end{itemize}

	Workshops explaining MEA's methodology and
	interpretation guidelines improved adoption quality.
	Maintaining visual consistency with existing platform
	reports helped users trust and consume MEA outputs.

	\section{Conclusion and Future Directions}
	
	MEA enables reliable post-hoc analysis of
	concurrent experiments, delivering consistent individual,
	combinatorial, and conditional insights at scale.

	By addressing experimental interference from overlaps,
	MEA complements existing velocity tools---variance
	reduction for statistical efficiency and sequential
	testing for early stopping---by removing the need for
	traffic fragmentation or temporal sequencing without
	sacrificing statistical validity.
	
	While this work focuses on online experimentation in
	technology platforms, the methodology applies broadly
	to any domain where multiple interventions are tested
	concurrently on overlapping populations, including
	clinical trials \cite{green2002clinical}, public health,
	and manufacturing.
	
	Future work could explore several extensions. First,
	extending the framework to observational settings
	where clean randomization is unavailable would
	significantly broaden applicability. By integrating
	techniques from causal inference for observational
	data---such as propensity score methods, instrumental
	variables, or difference-in-differences designs
	\cite{hernan2020causal, imbens2015causal}---MEA's
	partitioning and weighted estimation approach could
	be adapted to handle confounding and selection bias
	in non-experimental contexts.

	Second, Appendix~\ref{sec:mea-sample-size} shows that,
	under mild assumptions, MEA's variance is at most that
	of a coordinated factorial design, with exponential
	gains as $k$ grows;
	extending this analysis to incorporate CUPED-style
	variance reduction \cite{deng2013improving} is a
	natural next step.

	Third, while automated assumption diagnostics are
	already deployed in every MEA report
	(Appendix~\ref{sec:assumption-checking}),
	a natural extension is automated interaction
	screening across all overlapping experiment pairs
	to surface unexpected interactions proactively.

	\paragraph*{Acknowledgments.}
		The methodology and implementation were developed
		by Reza Hosseini. The author thanks Saad Eddin
		Al Orjany, Yiwei Liu, Jilei Yang, and Pasha
		Khosravi for code and draft reviews, and Aman
		Grover, Prashanthi Padmanabhan, Aarthi Jayaram,
		Shahriar Pezeshgi, Ruonan Hao, Ming Wu, and
		Hector Duenas for helping drive adoption across
		teams.

	\appendix

\section{Assumption Checking}
\label{sec:assumption-checking}

MEA's correctness relies on a causal
assumption---\emph{Arm-Trigger Invariance} ---
which we state here, connect to a causal graph,
show what it guarantees, and describe the automated
test used in every production MEA report.

\subsection{Arm-Trigger Invariance}

Let $A_i$ denote the arm assignment
in experiment~$i$ (a latent label set before the
user triggers),
$S_i = \mathbf{1}\{\text{user triggers }E_i\}$
the trigger state,
$V_i = A_i$ when $S_i = 1$ and
$V_i = \texttt{nan}$ otherwise (the observed
variant), $U$ unobserved user characteristics,
and $Y$ the outcome.

\begin{assumption}[Arm-Trigger Invariance]
\label{asm:trigger-invariance}
$P(S \mid \mathrm{do}(A = a)) = P(S)$
for all arm combinations
$a \in \mathcal{A}_1 \times \cdots
\times \mathcal{A}_k$.
\end{assumption}

In words: fixing any particular arm combination
(via Pearl's $\mathrm{do}$-operator~\cite{pearl2009causality})
does not change the distribution of trigger states.
\textbf{Note:} trigger states $S_i$ and $S_j$ may be
freely correlated---and typically are, through
shared user characteristics $U$---the assumption
only requires that \emph{arm assignments} do not
shift trigger probabilities.

\subsection{Causal Graph Interpretation}

Figure~\ref{fig:causal-graph} shows the causal
graph for $k = 2$.
The assumption is equivalent to the \emph{absence}
of the dashed red edges $A_j \to S_i$ ($j \neq i$)
--- no experiment's arm assignment shifts another
experiment's trigger rate.

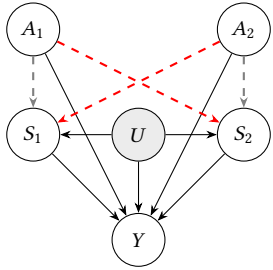
\begin{figure}[h]
\centering
\begin{tikzpicture}[
  >={Stealth[length=4pt]},
  every node/.style={
    font=\small,
    circle,
    draw,
    minimum size=20pt,
    inner sep=1pt},
  edge/.style={->},
  forbidden/.style={->,dashed,red,thick},
  selfabsent/.style={->,dashed,gray,thick}
]
  \node (A1) at (-1.4, 2.0) {$A_1$};
  \node (A2) at ( 1.4, 2.0) {$A_2$};
  \node (S1) at (-1.4, 0.6) {$S_1$};
  \node (S2) at ( 1.4, 0.6) {$S_2$};
  \node[fill=gray!15] (U) at (0, 0.6) {$U$};
  \node (Y) at (0, -0.8) {$Y$};
  \draw[edge] (U) -- (S1);
  \draw[edge] (U) -- (S2);
  \draw[edge] (U) -- (Y);
  \draw[edge] (A1) -- (Y);
  \draw[edge] (A2) -- (Y);
  \draw[edge] (S1) -- (Y);
  \draw[edge] (S2) -- (Y);
  \draw[selfabsent] (A1) -- (S1);
  \draw[selfabsent] (A2) -- (S2);
  \draw[forbidden] (A1) -- (S2);
  \draw[forbidden] (A2) -- (S1);
\end{tikzpicture}
\caption{Causal graph for $k{=}2$.
  $A_i$ = arm assignment (randomized);
  $S_i$ = trigger state;
  $U$ = unobserved user characteristics
  (shaded); $Y$ = outcome.
  Dashed edges are absent under
  Assumption~\ref{asm:trigger-invariance}:
  \textcolor{gray}{gray} $A_i \to S_i$ holds by
  experiment design (arm is assigned before
  triggering) and needs no test;
  \textcolor{red}{red} $A_j \to S_i$ is the
  non-trivial condition tested in production.}
\label{fig:causal-graph}
\end{figure}

Three sufficient conditions ensure the assumption holds:
\begin{enumerate}[nosep]
\item[(C1)] \emph{Randomization:}
  $A \perp\!\!\!\perp U$---arm assignments are
  independent of user
  characteristics~\cite{rubin1974estimating}.
\item[(C2)] \emph{No cross-trigger contamination:}
  $A_j \not\to S_i$ for all $j \neq i$ ---
  no experiment's arm shifts another experiment's
  trigger rate.
\item[(C3)] \emph{No within-trigger contamination:}
  $A_i \not\to S_i$ for all $i$ ---
  arm assignment does not affect the user's own
  trigger probability.
\end{enumerate}
(C1) is guaranteed by
randomization~\cite{rubin1974estimating}.
(C3) holds by within-experiment design: $A_i$ is
a latent label assigned before the user triggers,
and the trigger condition is defined on
pre-treatment behavior (e.g., user visits page~$X$;
treatment is what they see on page~$X$).
(C2) is where violations occur in practice and
is the target of the statistical test below.

Under (C1)--(C3), $S_i$ has no $A$-parent in the
causal graph, so $S = f(U)$ and
$S \perp\!\!\!\perp A$.

\subsection{What the Assumption Guarantees}

Arm-Trigger Invariance, combined with standard
platform design, guarantees the two conditions
needed for consistent weighted estimation:
\begin{enumerate}[nosep]
\item \emph{Stable stratum weights.}
  $P(S = s)$ does not shift when a particular arm
  combination is launched, so the weights $w_r$
  target the correct population mixture.
  This follows directly from
  Assumption~\ref{asm:trigger-invariance}.
\item \emph{Unbiased within-stratum estimation.}
  Arm assignments are independent within each
  stratum, so cell means are unconfounded.
  This follows from
  Assumption~\ref{asm:trigger-invariance}
  combined with independent hashing:
\end{enumerate}

\begin{proposition}
Under Assumption~\ref{asm:trigger-invariance}
and independent
hashing~\cite{tang2010overlapping},
$A_i \perp\!\!\!\perp A_j \mid S = s$
for all $i \neq j$ with $s_i = s_j = 1$.
\end{proposition}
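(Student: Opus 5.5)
Independent hashing gives $P(A=a)=\prod_i P(A_i=a_i)$. Because the arm labels are assigned by randomization, the interventional and observational trigger distributions coincide: $P(S=s\mid\mathrm{do}(A=a)) = P(S=s\mid A=a)$. Assumption~\ref{asm:trigger-invariance} then says $P(S=s\mid A=a)=P(S=s)$ for every $a$ and $s$, so $S\perp\!\!\!\perp A$. From there Bayes' rule gives the joint arm law within a stratum, and marginalizing gives the pairwise independence for any two triggered experiments.

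\textbf{Step 1 (do versus conditioning).} Randomization (C1) makes $A$ exogenous: it has no parents in the graph of Figure~\ref{fig:causal-graph}. In particular, there is no back-door path from $A$ to $S$ through $U$. By Pearl's back-door criterion (the empty set suffices), $P(S=s\mid \mathrm{do}(A=a)) = P(S=s\mid A=a)$. Combining this with the assumption gives $P(S=s\mid A=a) = P(S=s)$ for all $a,s$, that is, $S\perp\!\!\!\perp A$.

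\textbf{Step 2 (Bayes).} Fix any $s$ with $P(S=s)>0$. Then
\[
P(A=a\mid S=s)=\frac{P(S=s\mid A=a)\,P(A=a)}{P(S=s)} = P(A=a)=\prod_{l}P(A_l=a_l),
\]
so conditionally on $S=s$ the arm vector still has product form.

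\textbf{Step 3 (marginalize).} Sum the display over all coordinates except $i$ and $j$. This gives $P(A_i=a_i,A_j=a_j\mid S=s)=P(A_i=a_i)P(A_j=a_j)$. Summing once more over $a_j$, and then over $a_i$, identifies the conditional marginals: $P(A_i=a_i\mid S=s)=P(A_i=a_i)$ and $P(A_j=a_j\mid S=s)=P(A_j=a_j)$. Therefore the conditional joint law equals the product of the conditional marginals, which is $A_i\perp\!\!\!\perp A_j\mid S=s$. The restriction $s_i=s_j=1$ matters only because in that case $V_i=A_i$ and $V_j=A_j$ are observed. So the conclusion says that the observed variant cells in $\mathcal{R}_{11}$, and in general any region where both experiments trigger, are balanced.

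\textbf{Main obstacle.} The delicate point is Step~1. The assumption is stated interventionally, and we need the observational statement $S\perp\!\!\!\perp A$. This step requires that $A$ has no common cause with $S$, so that $\mathrm{do}$ and conditioning agree. That is exactly what hashing-based randomization guarantees, and I would make it explicit by noting that $A$ is a function of the user ID hash and an experiment-specific salt, independent of $U$. A second point to note is that the assumption as stated concerns the joint $\mathrm{do}(A=a)$. This is stronger than requiring invariance under each single-experiment intervention $\mathrm{do}(A_j=a_j)$. The argument needs the joint version: single-coordinate invariance would yield only $S\perp\!\!\!\perp A_j$ for each $j$, which does not imply the joint independence used in Step~2.
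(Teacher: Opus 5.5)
Your proof is correct and follows the paper's argument. Independent hashing gives independence of the arms; you invoke full product form where the paper uses only pairwise $A_i \perp\!\!\!\perp A_j$. Arm-Trigger Invariance gives $S \perp\!\!\!\perp A$, and then $P(A_i, A_j \mid S=s)$ factors into the conditional marginals. The one addition is that you explicitly justify the passage from the interventional statement $P(S \mid \mathrm{do}(A=a)) = P(S)$ to the observational independence $S \perp\!\!\!\perp A$, using randomization and the empty back-door set, whereas the paper simply asserts that step.
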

\emph{Proof.}
Independent hashing gives
$A_i \perp\!\!\!\perp A_j$
unconditionally~\cite{tang2010overlapping}.
Assumption~\ref{asm:trigger-invariance}
gives $S \perp\!\!\!\perp A$.
Hence $P(A_i, A_j \mid S{=}s)
= P(A_i)\,P(A_j)
= P(A_i \mid S{=}s)\,P(A_j \mid S{=}s)$.
\hfill$\square$

\subsection{Automated Diagnostic}

The visual and statistical tests below verify
both conditions simultaneously: the
\texttt{nan} columns test trigger
contamination~(C2), while the
non-\texttt{nan} columns test assignment
independence within $\mathcal{R}_{11}$
--- catching platform-level randomization
bugs even when~(C2) holds.

\subsection{$k$ Joint Tests}

For $k$ experiments, the diagnostic runs $k$
chi-squared homogeneity tests---one per
\emph{source} experiment~$j$.
Each test fixes experiment~$j$ and asks:
does $j$'s arm assignment shift the joint
distribution of all other experiments' variants?

Concretely, for source~$j$, restrict to users
with $V_j \neq \texttt{nan}$ (triggered in~$j$)
and build a contingency table whose rows are
$j$'s arm values and whose columns are all
joint combinations of $(V_i)_{i \neq j}$,
\emph{including \texttt{nan}} in each
dimension.
The \texttt{nan}-containing columns test
condition~(C2)---whether $j$'s arm shifts
other experiments' trigger rates---while
the non-\texttt{nan} columns simultaneously
verify within-stratum assignment
independence (the Proposition above).
A single test per source covers both.
For $k = 2$ this produces 2~tests; for
$k = 3$, 3~tests with Bonferroni correction
$\alpha/k$.

MEA reports include one grouped bar chart
per source (Figure~\ref{fig:assumption-check}).
Under both conditions, all bars within each
group should be approximately equal in height.
Unequal \texttt{nan}-column heights signal
trigger contamination---the most common
violation in production.

\subsection{Statistical Test}

Each source test uses the chi-squared statistic
on the contingency table described above.
The only structurally absent cell is the
all-\texttt{nan} corner (members who triggered
no experiment are not in the dataset).
Expected counts are computed via block
decomposition, preventing the absent corner
from distorting expected values.

At experiment scale, the $p$-value alone is insufficient:
with $N$ in the tens of thousands or more, even a trivially
small imbalance will produce $p \ll 0.05$.
We therefore report Cram\'{e}r's $V$ alongside the $p$-value:
\[
  V = \sqrt{\frac{\chi^2}{N \cdot \min(k_1,\, k_2)}}
\]
where $N$ is the total count over the table (excluding the
structurally absent cell) and $\min(k_1, k_2)$ is the
effective degrees-of-freedom
scaling~\cite{agresti2002categorical}.
$V$ is bounded in $[0, 1]$ and does not grow with $N$,
making it directly interpretable as an effect size:
$V \approx 0$ means independence; $V \approx 1$ means
perfect dependence.
As a rough guide: $V < 0.01$ is negligible, $V \approx 0.1$
is small, $V \approx 0.3$ is moderate.

Following the convention of sample ratio mismatch (SRM)
testing in online experimentation~\cite{kohavi2020trustworthy,
fabijan2019diagnosing}, a flag is raised when \emph{both}
$p < \alpha/k$ (Bonferroni-corrected at
$\alpha = 0.05$) \emph{and} $V > 0.01$.
The dual threshold prevents two failure modes:
the $p$-value catches real signal even at moderate $N$,
while the $V$ threshold filters out statistically
significant but practically negligible deviations.
Flags prompt investigation, not automatic invalidation.

\subsection{Production Examples}

\textbf{Example 1: Triggering violation (FLAG).}
Two notification experiments (several hundred
thousand members triggering $E_1$).
Treatment in $E_1$ dramatically increases the rate
at which members trigger $E_2$ (96\% vs.\ 79\%
non-triggered; Figure~\ref{fig:assumption-check}),
while assignment balance within $\mathcal{R}_{11}$
is near-uniform.
$\chi^2 = 46{,}330$ ($p \approx 0$),
$V = 0.262$---FLAG.

\textbf{Example 2: Assumption holds (PASS).}
The content test ($E_1$: three variants) and page redesign
($E_2$: two variants) from the main body use case
(Section~3).
The \texttt{nan} rates are virtually identical across
all $E_1$ variants (${\approx}\,91\%$ for each),
and the conditional $E_2$-variant distribution within
$\mathcal{R}_{11}$ shows negligible variation.
$\chi^2 = 0.21$ ($p = 0.995$), $V = 0.013$---PASS.

\subsection{Production Experience}

Across documented production analyses, approximately
4\% triggered the independence flag.
Violations fall into two categories:
treatment-induced triggering (Example~1 above), where
the experimental design creates a structural dependency
between triggering events; and assignment errors, where
a misconfigured assignment boundary causes correlated
variant draws.
The 96\% pass rate suggests the check is appropriately
calibrated: sensitive enough to catch real issues while
not blocking routine analyses.

\section{Simulation Validation}
\label{sec:simulations}

We validate MEA through comprehensive simulations
with known ground truth effects. The simulation
setup involves two overlapping experiments with
the following characteristics:

\subsection{Simulation Setup}

We simulate a heterogeneous population with two
experiments: $E_1$ (control, v1, v2; 50\% trigger rate;
variant split 40/30/30) and $E_2$ (control, enabled;
40\% trigger rate; variant split 40/60).
Triggering is independent across experiments.
Univariate effects on metric1: $E_1$ v1 ($-2$),
v2 ($+5$); $E_2$ enabled ($-2$).
Interactions: (v1, enabled) $+15$;
(v2, control) $-2$.
We focus on estimating the launch impact of
(v1, enabled) vs.\ (control, control), whose true
value depends on the overlap structure.

\subsection{Deriving the True Expected Delta}

Under independent triggering, the region
probabilities and per-region effects of
(v1, enabled) vs.\ (control, control) are:
$\mathcal{R}_{11}$: $P = 0.20$, effect $= +11$;
$\mathcal{R}_{10}$: $P = 0.30$, effect $= -2$;
$\mathcal{R}_{01}$: $P = 0.20$, effect $= -2$;
$\mathcal{R}_{00}$: $P = 0.30$, effect $= 0$.
Conditioning on the impacted population
($P = 0.70$), the true expected delta is:
\begin{align}
	\mathbb{E}[\Delta \mid \text{impacted}]
	&= \frac{0.20 \times 11 + 0.30 \times (-2) + 0.20 \times (-2)}{0.70}
	= \frac{1.2}{0.70} \approx 1.71
\end{align}
This masks substantial heterogeneity: $+11$ in
$\mathcal{R}_{11}$ (29\% of impacted) vs.\ $-2$ in the
singly-triggered regions (71\%).

\subsection{Validation of Expected Delta Computation}

Figure~\ref{fig:sim_validation} confirms that
simulation-based estimates converge to $+1.71$
as sample size increases, validating the derivation.

\begin{figure}[t]
	\centering
	\includegraphics[width=0.95\columnwidth]{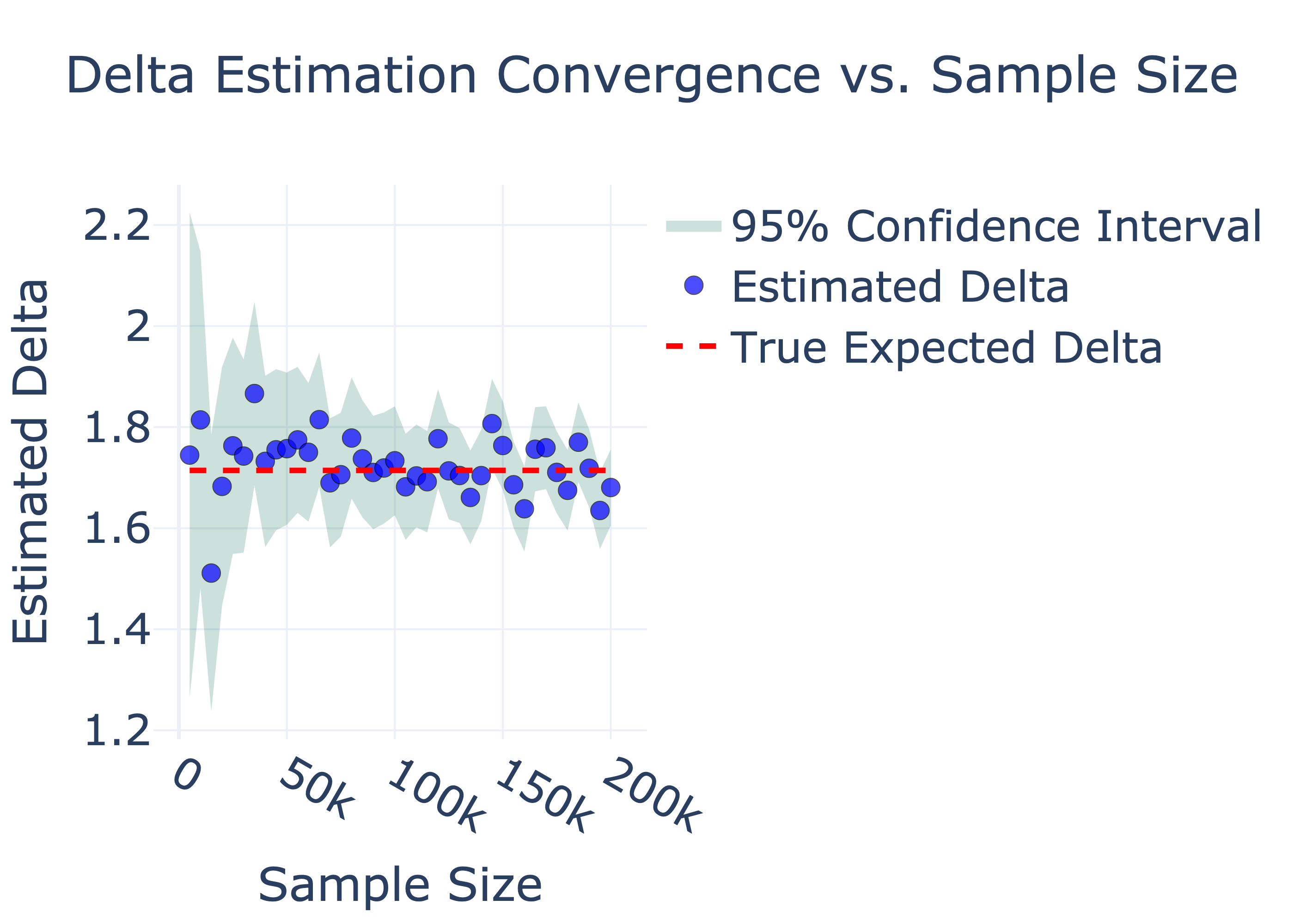}
	\caption{Convergence of simulation-based delta estimates
		to the mathematically expected delta ($+1.71$). The
		estimates converge to the expected value as sample
		size increases, validating the theoretical computation.}
	\label{fig:sim_validation}
\end{figure}

\subsection{Consistency and Coverage}

We conducted 1000 independent simulations with
randomized attributes and assignments.
Figure~\ref{fig:sim_histogram} shows the distribution
of MEA estimates centered on the true delta, with
one example 95\% CI overlaid.

\begin{figure}[t]
	\centering
	\includegraphics[width=0.99\columnwidth]{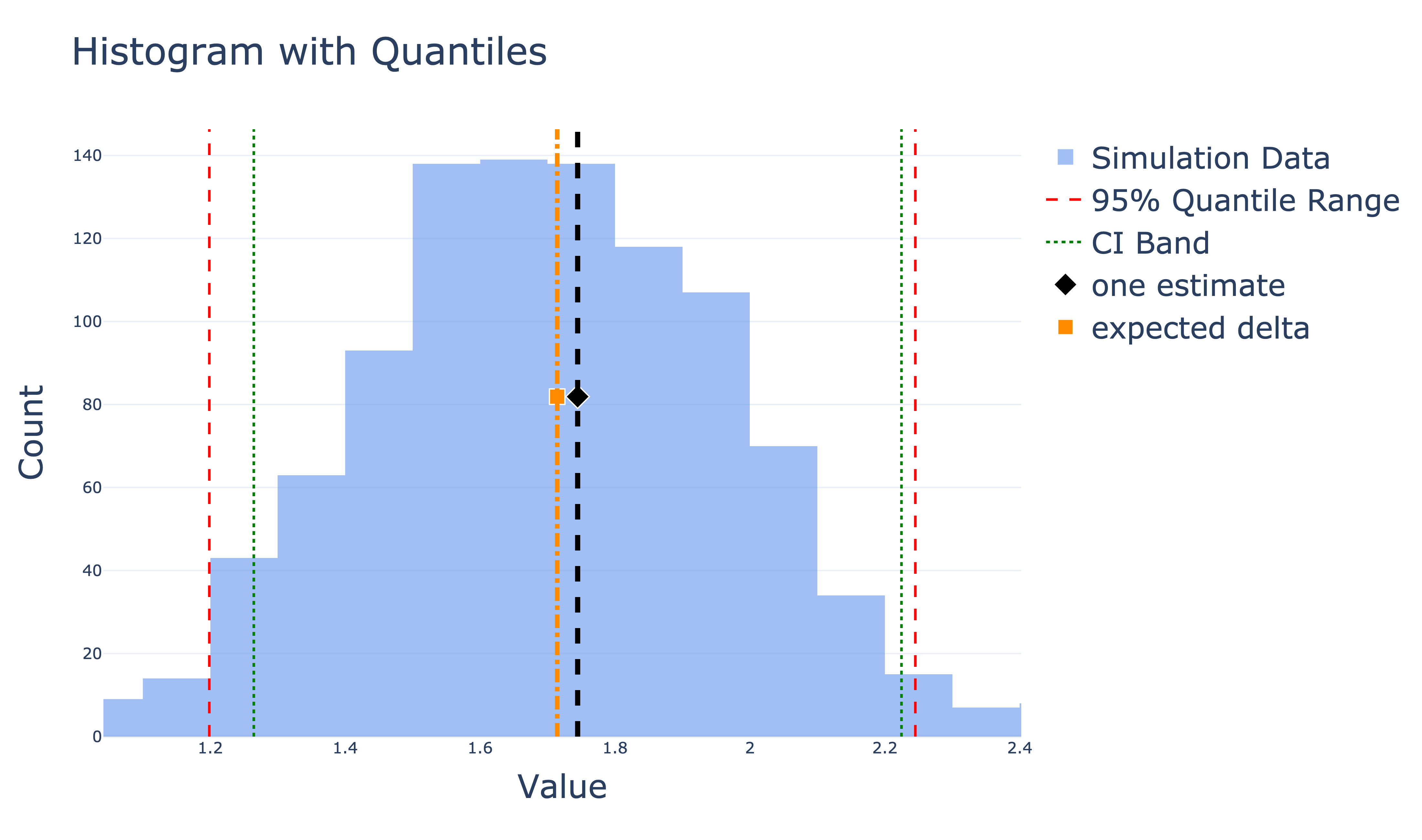}
	\caption{Distribution of 1000 MEA estimates with
		one example 95\% confidence interval (bands).
		The estimates are centered on the true expected
		delta (red dashed line at $+1.71$), confirming consistency.}
	\label{fig:sim_histogram}
\end{figure}

Across 1000 simulations, the 95\% confidence intervals
contained the true parameter 95.2\% of the time,
confirming correct coverage. The average CI length
closely matched the empirical 95\% quantile range
of the estimates, indicating well-calibrated
uncertainty quantification.

\subsection{Convergence with Sample Size}

Figure~\ref{fig:sim_convergence} shows MEA estimates
converging to the true parameter as sample size
increases from 5K to 200K, with CIs narrowing
appropriately.

\begin{figure}[t]
	\centering
	\includegraphics[width=0.99\columnwidth]{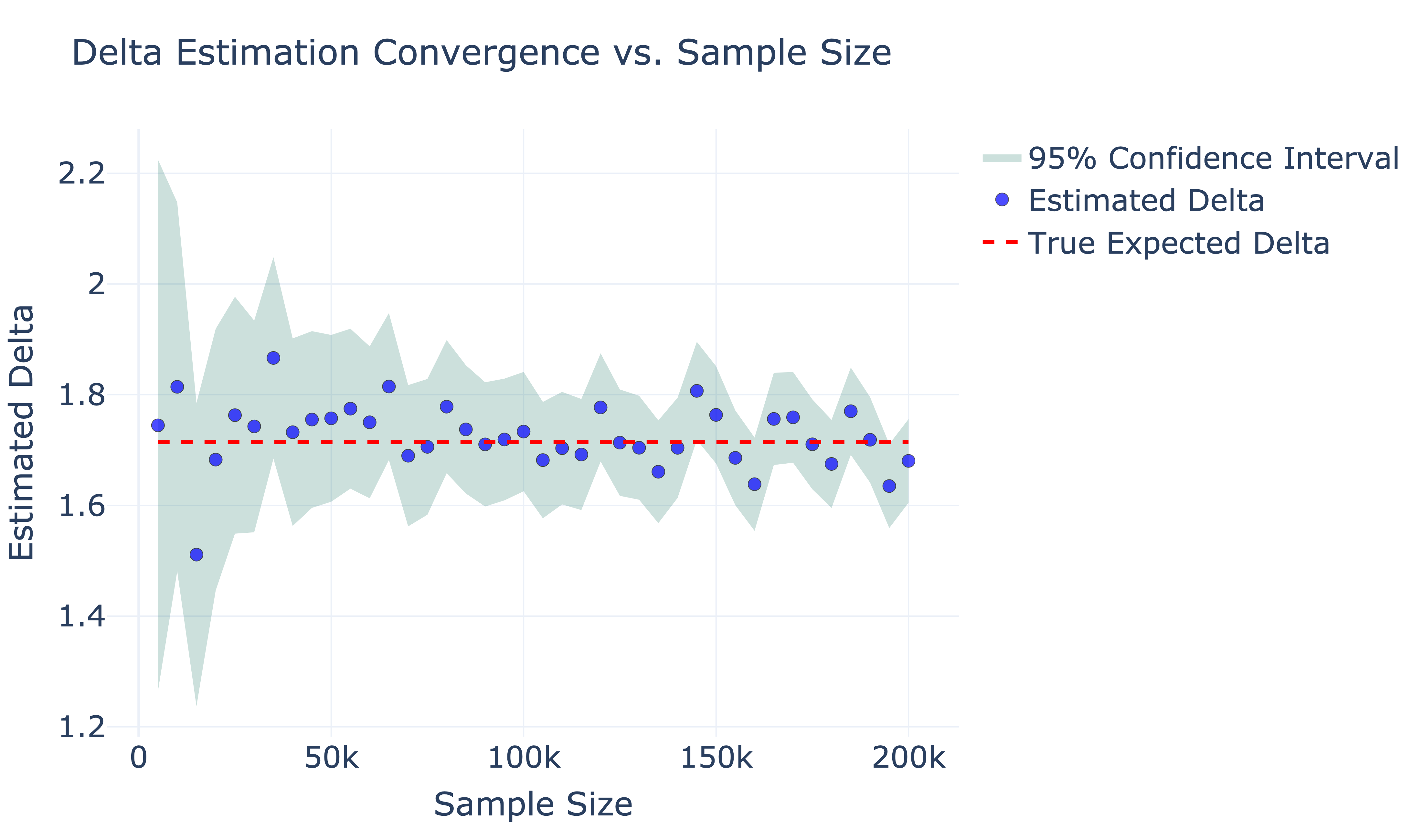}
	\caption{MEA estimate convergence as sample size
		increases from 5K to 200K units. Points show
		individual estimates, shaded region shows 95\%
		confidence intervals, and red dashed line
		indicates true expected delta ($+1.71$).}
	\label{fig:sim_convergence}
\end{figure}

Additional simulations---omitted for brevity---validate
MEA's consistency and coverage for ratio metrics
(e.g., click-through rate, conversion rate) using
both Delta method and bucketed jackknife variance
estimation, confirm convergence of jackknife
standard errors to their Delta method counterparts
as bucket count increases, and verify correct
coverage under dependent triggering structures
(e.g., near-total nesting).

\subsection{Comparison with Regression}

A natural alternative is interaction regression
$Y = \beta_0 + \beta_1 A_1 + \beta_2 A_2
+ \beta_{12} A_1 A_2 + \varepsilon$
on $\mathcal{R}_{11}$.
This is unbiased within $\mathcal{R}_{11}$ and
converges to that region's effect ($+11$), but the
true launch impact across all triggered users is
$+1.71$ (Table~\ref{tab:reg-sim}).
The $6\times$ overestimate arises because
regression ignores $\mathcal{R}_{10}$ and
$\mathcal{R}_{01}$, where the combination effect
is negative.
Including singly-triggered users by coding
non-triggered arms as $0$ contaminates the
reference cell with heterogeneous baselines,
introducing bias.

\begin{table}[t]
\centering
\caption{Regression vs.\ MEA: $(v1, \text{enabled})$
launch impact.}
\label{tab:reg-sim}
\small
\begin{tabular}{lcc}
\toprule
\textbf{Approach}
& \textbf{Data used}
& \textbf{Estimate} \\
\midrule
Regression on $\mathcal{R}_{11}$
  & $\mathcal{R}_{11}$ only
  & $+11.0$ \\
MEA
  & All triggered
  & $+1.71$ \\
\midrule
True launch impact
  & ---
  & $+1.71$ \\
\bottomrule
\end{tabular}
\end{table}

Additionally, regression does not naturally extend
to ratio metrics (e.g., CTR, conversion rate),
which require joint estimation of numerator and
denominator with appropriate variance correction.
MEA handles these via the Delta method and
bucketed jackknife
(Section~\ref{sec:ratio-metrics}).

\section{Univariate Analysis vs.\ MEA}
\label{sec:concurrent-vs-sequential}

A natural question is whether standard analysis approaches---either analyzing concurrent experiments independently
or running them sequentially in time---can match MEA's
joint analysis.
We construct a concrete scenario showing that both
alternatives recommend the \emph{worst} launch combination,
while MEA correctly identifies the optimum.

\subsection{Setup}

Two binary experiments (control/treatment each) run
concurrently on a shared population with
\emph{partial overlap}: each experiment triggers on
approximately~30\% of the population, producing
overlap regions
$\mathcal{R}_{11} \approx 9\%$,
$\mathcal{R}_{10} \approx 21\%$, and
$\mathcal{R}_{01} \approx 21\%$.

Ground-truth causal effects on the primary metric
(relative to all-control baseline):
\begin{itemize}[nosep]
  \item Experiment~1 treatment: $+3$
  \item Experiment~2 treatment: $+4$
  \item Interaction when both treatments active:
        $-10$ (strong negative synergy)
\end{itemize}

\noindent
Within the doubly-triggered region~$\mathcal{R}_{11}$, the
cell-level expected metric values are:
\begin{center}
\begin{tabular}{lcc}
\toprule
 & Expt 2 = control & Expt 2 = treatment \\
\midrule
Expt 1 = control   & $0$ (baseline)
  & $+4$ \textbf{(best)} \\
Expt 1 = treatment  & $+3$
  & $3+4-10 = -3$ \textbf{(worst)} \\
\bottomrule
\end{tabular}
\end{center}

Each treatment individually helps, but the combination
reverses the gains.  Table~\ref{tab:cell-effects-by-region}
shows the causal effect of each launch combination in each
triggering region.
The $-10$ interaction only activates in~$\mathcal{R}_{11}$
when both treatments are active.

\begin{table}[t]
\centering
\caption{Effect of each launch combination by triggering
  region (30\% trigger rates).}
\label{tab:cell-effects-by-region}
\begin{tabular}{lccc}
\toprule
\textbf{Launch combo}
  & \makecell{$\mathcal{R}_{10}$\\($\approx 21\%$)}
  & \makecell{$\mathcal{R}_{01}$\\($\approx 21\%$)}
  & \makecell{$\mathcal{R}_{11}$\\($\approx 9\%$)} \\
\midrule
$(c_1, t_2)$ & --- & $+4$ & $+4$ \\
$(t_1, c_2)$ & $+3$ & --- & $+3$ \\
$(t_1, t_2)$ & $+3$ & $+4$ & $-3$ \\
\bottomrule
\end{tabular}
\end{table}

\subsection{Three Approaches Compared}

\paragraph{Approach 1: Concurrent univariate.}
Both experiments run simultaneously, each analyzed
in isolation.
Each experiment's univariate treatment effect pools
across all regions where it triggers.
For Experiment~1, the triggered population is
$\mathcal{R}_{10} \cup \mathcal{R}_{11}$.
In~$\mathcal{R}_{10}$ ($\approx 21\%$),
the effect is a clean~$+3$.
In~$\mathcal{R}_{11}$ ($\approx 9\%$), Experiment~2 is
active with a 50/50 split, so the average
Experiment~1 effect is
$3 + 0.5 \times (-10) = -2$.
Pooling:
\[
  \widehat{\tau}_1^{\,\mathrm{univar}}
  \;\approx\;
  \frac{21}{30} \times 3
  \;+\;
  \frac{9}{30} \times (-2)
  \;=\;
  +1.5
  \;>\;0.
\]
An analogous calculation gives
$\widehat{\tau}_2^{\,\mathrm{univar}} \approx +2.5 > 0$.
Both positive $\to$ decision: launch $(t_1, t_2)$.

\paragraph{Approach 2: Sequential in time.}
Run Experiment~1 first, ship its winner ($t_1$),
then run Experiment~2 with $t_1$ deployed.
\begin{enumerate}[nosep]
  \item Experiment~1 analyzed alone
    (univariate effect $= +1.40 > 0$).
    Decision: ship $t_1$.
  \item $t_1$ is now permanent for all Experiment~1
    users.  $\mathcal{R}_{10}$ is no longer experimental.
  \item Experiment~2 runs on
    $\mathcal{R}_{01} \cup \mathcal{R}_{11}$.
    In $\mathcal{R}_{01}$ ($\approx 70\%$ of Expt~2's
    population): $t_1$ has no effect, clean $+4$.
    In $\mathcal{R}_{11}$ ($\approx 30\%$): $t_1$ already
    deployed, conditional effect $\approx -6$.
  \item Pooled: $0.70 \times (+4) + 0.30 \times (-6)
    = +0.98 > 0$.
  \item Decision: ship $t_2$.
    Final outcome: $(t_1, t_2)$---the worst combination.
\end{enumerate}

\paragraph{Approach 3: MEA joint.}
Both experiments run simultaneously, analyzed jointly
across all triggering regions.
MEA evaluates all launch combinations and identifies
$(c_1, t_2)$ as optimal.

\subsection{Simulation Results}

We validate with $N = 100{,}000$ members using
independent hashing for assignment, 30\% trigger rates,
and noise $\sigma = 5.0$.
Table~\ref{tab:three-way-comparison} compares all three
approaches.

\begin{table}[t]
\centering
\caption{Three-way comparison ($N = 100{,}000$,
  30\% trigger rates).}
\label{tab:three-way-comparison}
\begin{tabular}{llc}
\toprule
\textbf{Approach} & \textbf{Decision} & \textbf{Effect} \\
\midrule
Concurrent univariate
  & $(t_1, t_2)$
  & $+2.25$ \\
Sequential (Expt 1 first)
  & $(t_1, t_2)$
  & $+2.25$ \\
\textbf{MEA joint}
  & $\mathbf{(c_1, t_2)}$
  & $\mathbf{+3.87}$ \\
\bottomrule
\end{tabular}
\end{table}

\noindent
MEA's combination estimates, validated against
analytically computed ground truth:
\begin{center}
\begin{tabular}{lcc}
\toprule
\textbf{Combination}
  & \textbf{Ground truth}
  & \textbf{MEA estimate} \\
\midrule
$(c_1, t_2)$
  & $+4.00$ & $+3.87$ \quad \textbf{(best)} \\
$(t_1, c_2)$
  & $+3.00$ & $+2.85$ \\
$(t_1, t_2)$
  & $+2.35$ & $+2.25$ \quad \textbf{(worst)} \\
\bottomrule
\end{tabular}
\end{center}

\subsection{Discussion}

The failure of both alternatives stems from
\emph{dilution}: univariate analysis averages over
singly-triggered regions where the interaction does not
apply, masking a harmful combination effect in
$\mathcal{R}_{11}$.
This arises whenever experiments partially overlap and
have a strong negative interaction---conditions that
are common in practice.
Only MEA's post-stratified joint analysis, which
separately estimates effects in each triggering region,
correctly identifies the launch-combination optimum.


\section{Power Analysis}
\label{sec:mea-sample-size}

\paragraph{Notation.}
Consider $k$ concurrently running experiments on an
eligible population of size $N$.
Experiment $j$ has $\ell_j$ variants (including control).
A \emph{triggering stratum} is a binary vector
$s = (s_1, \ldots, s_k) \in \{0,1\}^k$,
where $s_j = 1$ means the member triggered
experiment $j$.
Let $d(s) = \sum_j s_j$ denote the \emph{degree}
of $s$ (number of active experiments).
We exclude the zero stratum $s = 0$.

\begin{itemize}[nosep]
\item $N_s$ = number of members in stratum $s$;
  \; $N_+ = \sum_{s \neq 0} N_s$ = total triggered
  population.
\item $w_s = N_s / N_+$ = weight of stratum $s$
  among triggered members; $\sum_{s \neq 0} w_s = 1$.
\item $\sigma^2$ = within-cell outcome variance,
  assumed constant across treatment assignments
  (homoscedasticity):
  $\mathrm{Var}(Y \mid A{=}a) = \sigma^2$
  for all $a$.
\item $\sigma_s^2$ = within-cell outcome variance
  in stratum $s$:
  $\mathrm{Var}(Y \mid A{=}a,\, S{=}s) = \sigma_s^2$.
  Define
  $\tilde\sigma^2 = \sum_s w_s \sigma_s^2$
  (weighted average within-cell variance).
  By the law of total variance,
  $\tilde\sigma^2 \leq \sigma^2$
  (conditioning on stratum can only reduce
  within-cell variance).
\end{itemize}

A \emph{coordinated factorial design} randomizes
all $N_+$ users jointly across all $k$ experiments,
creating $\prod_{j=1}^{k} \ell_j$ cells with
$N_+ / \prod_{j=1}^{k} \ell_j$ users per cell.
Under homoscedasticity, the variance of any
combination-effect estimate is
$\mathrm{Var}_{\mathrm{fac}}
= 2\sigma^2 \prod_{j=1}^{k} \ell_j \,/\, N_+$.

\paragraph{MEA variance.}
From Section~2.3, the MEA estimator is
$\hat\tau = \sum_{s \neq 0} w_s \hat\delta_s$.
Under balanced randomization, stratum $s$ has
$\prod_{j:\,s_j=1} \ell_j$ cells, so
the within-stratum variance is
$2\sigma_s^2 \prod_{j:\,s_j=1} \ell_j / N_s$.
Substituting $w_s = N_s/N_+$:
\[
\mathrm{Var}_{\mathrm{MEA}}
= \frac{2}{N_+}
  \sum_{s \neq 0} w_s\,\sigma_s^2
  \prod_{j:\,s_j=1} \ell_j.
\]
This treats weights as fixed; estimated weights add
$O(\sigma^2 \prod \ell_j / N^2)$, negligible provided
$\prod \ell_j \ll N$ (stable weights).

\paragraph{MEA is always at most as expensive as factorial.}
Since $\prod_{j:\,s_j=1} \ell_j \leq \prod_{j=1}^{k} \ell_j$
for every $s$ (partial overlap has fewer cells)
and $\tilde\sigma^2 \leq \sigma^2$ (established above):
\[
\mathrm{Var}_{\mathrm{MEA}}
\;\leq\;
\frac{2\,\tilde\sigma^2
  \prod_{j=1}^{k} \ell_j}{N_+}
\;\leq\;
\frac{2\,\sigma^2 \prod_{j=1}^{k} \ell_j}{N_+}
= \mathrm{Var}_{\mathrm{fac}}.
\]

\paragraph{How large is the difference?}
Assume $\sigma_s^2 = \sigma^2$, $\ell_j = \ell$, and
independent triggering at equal rates $r$.
\begin{align*}
\mathrm{Var}_{\mathrm{MEA}}
  &= \frac{2}{N_+}
     \sum_{s \neq 0} w_s\,\sigma_s^2
     \prod_{j:\,s_j=1} \ell_j
  \\[4pt]
  &= \frac{2\sigma^2}{N_+}
     \sum_{s \neq 0} w_s\, \ell^{d(s)}
  && (\sigma_s^2 = \sigma^2,\;
      \textstyle\prod_{j:\,s_j=1} \ell = \ell^{d(s)})
  \\[4pt]
  &= \frac{2\sigma^2}{N_+}
     \cdot
     \frac{\sum_{s \neq 0}
       (\ell r)^{d(s)}(1{-}r)^{k-d(s)}}
          {1 - (1{-}r)^k}
  && (w_s = \tfrac{r^{d(s)}(1{-}r)^{k-d(s)}}
            {1-(1{-}r)^k})
  \\[4pt]
  &= \frac{2\sigma^2}{N_+}
     \cdot
     \frac{(1{+}(\ell{-}1)r)^k - (1{-}r)^k}
          {1 - (1{-}r)^k}.
  && (\text{product-of-sums})
\end{align*}

The ratio:
\begin{equation*}
\frac{\mathrm{Var}_{\mathrm{MEA}}}
     {\mathrm{Var}_{\mathrm{fac}}}
= \frac{(1{+}(\ell{-}1)r)^k - (1{-}r)^k}
       {\ell^k(1 - (1{-}r)^k)}.
\end{equation*}

For binary experiments ($\ell = 2$, $r = 1/2$),
the ratio simplifies to
$(3^k - 1)\,/\,{2^k(2^k - 1)}
\approx (3/4)^k$:
e.g.\ 0.67 at $k{=}2$,
0.46 at $k{=}3$,
0.33 at $k{=}4$,
0.24 at $k{=}5$.
The savings grow exponentially with $k$,
and are larger at lower trigger rates.

\end{document}